\documentclass[journal, twoside]{IEEEtran}

\usepackage[caption=false,font=normalsize,labelfont=sf,textfont=sf]{subfig}
\usepackage{stfloats}
\usepackage{url}
\usepackage{verbatim}
\usepackage{cite}
\usepackage{amsmath,amssymb,amsfonts}
\usepackage{graphicx}
\usepackage{textcomp}
\usepackage{xcolor}
\usepackage{booktabs}
\def\BibTeX{{\rm B\kern-.05em{\sc i\kern-.025em b}\kern-.08em
    T\kern-.1667em\lower.7ex\hbox{E}\kern-.125emX}}
\usepackage{tikz}
\usepackage{adjustbox}
\usepackage{booktabs}
\usepackage{colortbl}
\usepackage{multirow}
\usepackage{algorithm}
\usepackage{algpseudocode}
\usepackage{amsmath, amsthm}
\usepackage{float} 
\usepackage{pgfplots}
\pgfplotsset{compat=1.15}
\usetikzlibrary{patterns}
\DeclareMathOperator*{\argmax}{arg\,max}

\usepackage{tikz}
\usepackage{amsmath}
\usepackage{amsfonts}
\usepackage{filecontents}
\usepackage{bbding}
\usepackage{algorithm}
\usepackage{algpseudocode}
\usepackage{algorithmicx}

\usepackage[normalem]{ulem}
\usepackage{hyperref}
\usepackage{array}
\newtheorem{theorem}{Theorem}
\newtheorem{lemma}{Lemma}
\newtheorem{definition}{Definition}

\begin{document}
\title{Dependency-Aware Dominant Resource Fairness for Multi-Tenant Multi-Resource Systems 
}



\author{Zeidan~Braik, Francesca~Fossati, Sahar~Hoteit,~\IEEEmembership{Senior Member,~IEEE,} Stefano~Secci,~\IEEEmembership{Senior Member,~IEEE}
\thanks{Zeidan Braik is with the Conservatoire National des Arts et M\'etiers (Cnam), C\'edric, Paris, France.
(email: zeidan.braik@cnam.fr).}
\thanks{Francesca Fossati is with the Sorbonne Universit\'e, CNRS LIP6, Paris, France (email: francesca.fossati@sorbonne-universite.fr).}
\thanks{Sahar Hoteit is with the Universit\'e Paris-Saclay, CNRS L2S, CentraleSup\'elec, Gif-sur-Yvette, and with Institut Universitaire de France (IUF), France (email: sahar.hoteit@centralesupelec.fr).}
\thanks{Stefano Secci is with the Conservatoire National des Arts et M\'etiers (Cnam), C\'edric, Paris, France (email: stefano.secci@cnam.fr).}
\thanks{This work was funded by the French ANR HEIDIS project (contract nb: ANR-21-CE25-0019; \url{https://heidis.roc.cnam.fr}) and the IPCEI ME/CT Orange project (contract nb: DOS0239248/00; \url{https://piiec5g.wp.imt.fr}).}
\thanks{This work has been submitted to the IEEE for possible publication. Copyright may be transferred without notice, after which this version may no longer be accessible.}}

\markboth{IEEE Transactions on Networking,~VOL.~XX,~YYYY }%
{Braik \MakeLowercase{\textit{et al.}}: Dependency-Aware Dominant Resource Fairness for Multi-Tenant Multi-Resource Systems}

\maketitle


\begin{abstract}
Multi-resource allocation in network-congested, multi-tenant systems in which demand exceeds available capacity is challenging, as there is no straightforward way to determine how much of each resource to assign, especially when resources are interdependent. Classical approaches such as Dominant Resource Fairness (DRF), which generalizes Max-Min Fairness (MMF) to multiple resources, assume linear proportional dependencies across resources, requiring allocations to follow fixed proportions implied by tenants’ demands. However, this assumption may lead to inefficient allocations and resource waste, with allocated resources that go unused in practice. In this paper, we consider a multi-resource orchestrator and propose the Dependency-aware Dominant Resource Fairness (DDRF) policy, a centralized generalization of DRF  that considers inter-resource dependencies: it equalizes active dominant shares of congested resources, preserving DRF’s desirable properties, while avoiding its inefficiency with low-demand tenants. We prove that DDRF always saturates at least one congested resource, ensuring Pareto efficiency and eliminating resource waste. We evaluate DDRF using Amazon EC2 traces and a virtualized radio access network (vRAN) use case while considering real resource dependencies. The results show that DDRF improves effective user satisfaction by up to $80\%$ and reduces resource waste by up to $60\%$ compared to dependency-agnostic baselines, while improving Jain's fairness index by more than $15\%$ compared to the utilitarian  policy.
\end{abstract}

\begin{IEEEkeywords}
    multi-resource allocation, network slicing, DRF.
\end{IEEEkeywords}

\vspace{-0.5 cm}
\section{Introduction}
\IEEEPARstart{I}{n} computing and networking, resource allocation has been a central research direction for decades, spanning contexts from scheduling in operating systems~\cite{chandra2000surplus} to resource management in datacenters, cloud and radio networks~\cite{boutin2014apollo}. More recently, its necessity has become especially prevalent in networking~\cite{slicing5G}, where multi-tenant\footnote{Hereafter the terms "tenants" and "users" are used interchangeably, we use the feminine pronoun ("she","her"); this choice is arbitrary.} infrastructures, such as 5G and cloud-native systems, must share finite resources across diverse services. A fundamental challenge in these environments is congestion, which arises when aggregate traffic exceeds the capacity of at least one resource. Under congestion, resource allocation must decide how to divide scarce capacity across demand vectors while adopting principled notions of fairness, ensuring tenants are treated equitably under contention.

The theory of single-resource allocation is largely complete~\cite{proportionalfairness, proportionalfairnessequation, bertsekasbook, mo2002fair, lan2010axiomatic}, with well-established fairness notions such as \textit{Max-Min Fairness (MMF)}~\cite{bertsekasbook} and \textit{Proportional Fairness (PF)}~\cite{proportionalfairness,proportionalfairnessequation}. In contrast, multi-resource allocation remains an active frontier~\cite{DRFpaper, wang2014multi, joe2013multiresource, MURANESPaper}, as recent efforts seek to extend single-resource fairness to settings with multiple resources~\cite{DRFpaper,MURANESPaper}. 

The extensions to multi-resource rules, developed in contexts ranging from cloud computing to congestion control, formalize different notions of fairness. However, multi-resource rules, throughout the literature, are often built on one of two simplifying assumptions: the first is a \textit{linear proportional dependency} between resources (e.g., if a user requests 2 units of resource 1 and 4 units of resource 2, then the allocation is expected to maintain the same 2:4 ratio across both resources); the second is a structured utility model, most notably the \textit{Leontief utility}~\cite{bonaldbmf, hug, surveyresourceallocation}, where a user's satisfaction is determined by the smallest proportion of resources received, an approach that, in effect, also reduces to linear proportional dependency. Under such linear assumptions, the inherently multi-dimensional problem is reduced to a single-resource problem, as the task becomes finding a single ratio per user that scales its resource demands uniformly.

\begin{figure}
\centering
    \includegraphics[width=\columnwidth]{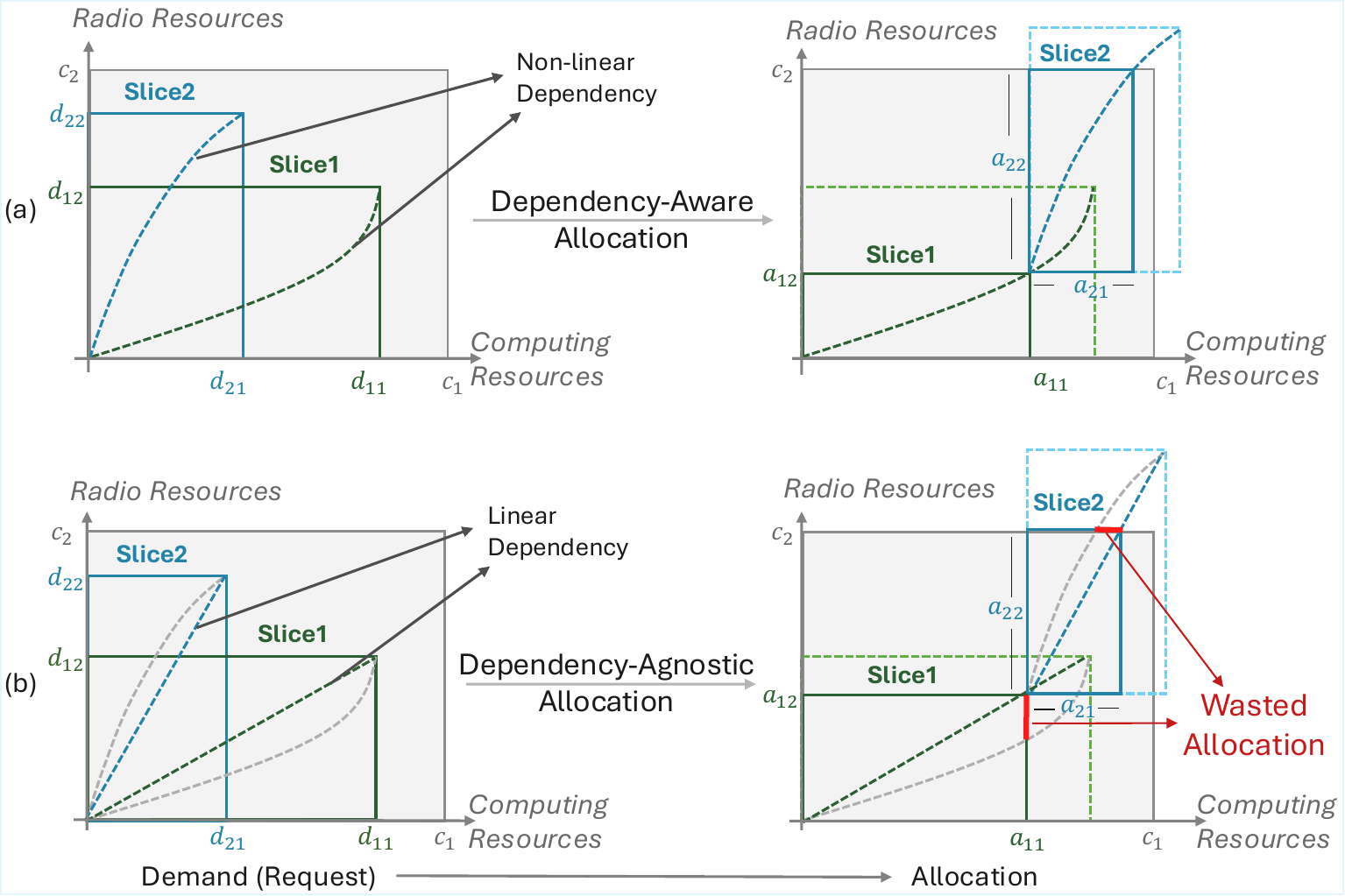}
    \vspace{-0.7cm}
    \caption{Comparison of dependency-aware and dependency-agnostic allocations for two slices. Left: demands with a nonlinear dependency in (a) and the imposed linear dependency in (b). Right: resulting allocations; enforcing the linear relation in (b) yields an unused portion, i.e., resource waste.}
    \label{fig:dependency}
\end{figure}

The need to account for \textit{generic and realistic dependencies} is increasingly critical in modern applications, particularly in networking. For instance, in mobile networks~\cite{polese2023understanding}, radio and computing resources are coupled in nonlinear ways~\cite{khatibi2018modelling, hojeij2025flexible}, moreover link bandwidth over midhaul and backhaul~\cite{xiao2020can} depends on the transport block size (TBS), which is defined in 3GPP specifications as a piecewise-affine function of the number of resource blocks~\cite{3gpp.38.214}. 
Existing state-of-the-art policies often oversimplify these dependencies e.g. by assuming linear ratios across resources. We refer to these policies in the following as \textit{Dependency-Agnostic} ones.
Ignoring realistic dependencies leads to allocation waste (i.e., resources allocated to tenants but could not be used in practice). Figure~\ref{fig:dependency} shows an example of two network slices: slice 1 asking for small amount of  radio resources but higher computing resources $(d_{11}$ and $d_{12})$, respectively  while slice 2 is asking for the opposite (these demands are denoted $d_{21}$, $d_{22}$). 
Top row shows dependency-aware allocations, the bottom row assumes fixed linear proportions (\emph{dependency-agnostic}). 
  The computing resources have a limited capacity, denoted by $c_1$
 while the radio resources have a limited capacity, denoted by $c_2$.
Rectangles on the right represent the allocated amount of resources ($a_{11}, a_{12}$ for slice 1 and $a_{21}, a_{22}$ for slice 2), with dashed outlines denoting original demands.
As shown in the figure, dependency-agnostic allocation leads to resource waste, i.e., resources are allocated but remain unusable in practice due to existing inter-resource dependencies and potential bottlenecks in one or more resources.

The presence of generic inter-resource dependencies has been identified in~\cite{MURANESPaper}; however, to the best of our knowledge, no other work in the literature has proposed a multi-resource allocation approach that explicitly accounts for these dependencies.
Authors in~\cite{MURANESPaper} introduce a generic formulation that avoids linear aggregation across resources by defining a decision variable for each user--resource pair, thereby enabling dependencies to be expressed directly as constraints. Nevertheless, that way does not provide a systematic characterization of dependency models, nor does it adapt the objective function to account for such dependencies. To fill this gap, we propose a dependency-aware allocation scheme that explicitly accounts for inter-resource dependencies, with the aim of preventing resource waste while ensuring a fair distribution of resources.

To design a fair allocation mechanism, two main approaches exist: (i) maximize a Schur-concave utility function~\cite{marshall1979inequalities,mo2002fair,MURANESPaper} and then analyze the fairness properties induced by the optimizer, or (ii) start from a target fairness principle and impose it explicitly as a constraint while maximizing utilization (i.e., satisfied demand)~\cite{zaharia2010delay,DRFpaper,popa2012faircloud}. In this work, we adopt the second approach. Specifically, we generalize \textit{max-min fairness} to the dependency-aware multi-resource setting, where equity is imposed directly on allocations: tenants receive equal allocations unless a tenant's demand is smaller, in which case it is fully satisfied. This corresponds to the \textit{leximin principle}, inspired by Rawlsian fairness~\cite{rawls1971theory}, which maximizes the allocation of the least-served tenant before considering the next, and so on. Our approach differs from the Dominant Resource Fairness (DRF)~\cite{DRFpaper} which applies the leximin principle to multi-resource settings  by equalizing tenants’ normalized shares of their bottleneck resource. However, DRF faces key limitations, particularly under congestion and in the presence of low-demand tenants, where it may fail to satisfy properties such as bottleneck saturation and share-incentive.

More precisely, in this paper we propose the \textit{ Dependency-aware Dominant Resource Fairness} (DDRF) algorithm, a centralized static allocation mechanism that extends DRF to handle congestion and generic inter-resource dependencies.
In DDRF, tenant satisfaction is determined per resource by introducing a distinct satisfaction variable for each tenant-resource pair, rather than a single satisfaction ratio per tenant.

We model dependencies through a family of dependency groups, where each group captures resources that are coupled for that tenant. Within each group, the effective allocation is determined by the congested bottleneck resource (i.e., the resource with the highest demand-to-capacity ratio, referred to as the dominant share). DDRF then equalizes the dominant shares of these bottleneck resources across tenants' dependency groups that share a common resource, while granting low-demand tenants their full demand, and redistributing any remaining surplus resource among the others according to MMF. This paper makes the following contributions:

\begin{itemize}
    \item We propose the {Dependency-aware Dominant Resource Fairness} (DDRF) algorithm, which introduces tenant-specific \emph{dependency groups} to capture generic couplings across resources. DDRF identifies low-demand tenants, reallocates surplus resources when they are present, and equalizes dominant shares of active congested resources across dependency groups that share a common resource.
    \item We introduce an \emph{effective satisfaction} metric to capture the used allocation's portion under dependencies. 
    \item We develop a practical DDRF solver that combines a convex heuristic with an evolutionary-optimization to handle convex and selected non-convex dependency cases.
    \item We devise a dependency-aware utilitarian baseline for fairness assessment.
    \item We evaluate DDRF on Amazon EC2 traces and a virtualized radio access network (vRAN) case study coordinating radio and computing resources across evolved Node Bs (eNBs). We consider demand profiles, dependency models, and congestion regimes, and benchmark against state-of-the-art baselines in terms of effective satisfaction, allocation waste and fairness.
\end{itemize}
The remainder of this paper is organized as follows. Section~\ref{backgroundandrelatedwork} reviews the background and related work. Section~\ref{SystemModel} introduces our system model. Section~\ref{ddrf} presents our proposed DDRF framework. Section~\ref{sec:simulation} reports the simulation settings and performance evaluation. Section~\ref{sec:results} provides experimental results, and Section~\ref{conclusion} concludes the paper.

\section{Background and related work}
\label{backgroundandrelatedwork}

Resource allocation has been extensively studied across diverse contexts. In its general form, when $N$ tenants submit demands for $M$ resources, the system is represented by a demand matrix $D \in \mathbb{R}_+^{N \times M}$, where $d_{ij} \in D$ denotes tenant $i$'s demand for resource $j$. The available capacities are given by a vector $C \in \mathbb{R}_+^M$, where $c_j \in C$ represents the total available capacity of resource $j$. The ordered pair $(D,C)$ defines a \emph{multi-resource allocation problem}.
The output of this problem is  given by $A \in \mathbb{R}_+^{N \times M}$ where $a_{ij} \in A$ is the amount of resource $j$ allocated to tenant $i$ such that $\sum_{i = 1}^N a_{ij} \leq c_j, \; \forall \; j$. 
The system is said to be \textit{congested} if there exists at least one resource $j$ such that $\sum_{i=1}^N d_{ij} > c_j$. Under congestion, not all tenants can be fully satisfied, making well-defined notions of fairness essential. It is worth mentioning that we consider the case where tenants cannot receive more than their demand, i.e., over-provisioning is not allowed in our system;  $0 \leq a_{ij} \leq d_{ij}$.

Most of the papers in the literature consider the case of single-resource allocation (i.e., $M=1$) or an oversimplified multiple-resource allocation schemes, in which strict linear proportionality between resources is assumed, resulting in allocations that preserve these fixed ratios. In other terms, under the linear-dependency scenario, the induced allocation is given by $a_{ij} = x_{i} \times d_{ij} $ where $x_i \in [0, 1]$ denotes the fraction of tenant $i$'s demand that is satisfied uniformly across all resources; we refer to this $x_i$ as the user satisfaction.

Several prominent fairness notions have exist for the single-resource setting, including Max-Min Fairness (MMF)~\cite{bertsekasbook,ogryczak2014fair,mmfnsdi}, Proportional Fairness (PF)~\cite{proportionalfairness,proportionalfairnessequation}, the more general $\alpha$-fairness that is also known as $(p,\alpha)$-Proportional Fairness~\cite{mo2002fair} which spans from utilitarian to PF and MMF, and some game-theoretic solutions such as Constrained Equal Awards, Constrained Equal Losses~\cite{moulin2002axiomatic} and Mood Value~\cite{moodvalue}.

A desirable property of any fairness rule is \emph{strategy-proofness}: tenants should not gain by misreporting their demand. Among the above, MMF is unique in being strategy-proof, since tenants with demands greater than or equal to $C/N$ cannot improve their allocation by cheating. Formally, a feasible single-resource allocation $(a_1, \dots, a_N)$ is \emph{max-min fair} if it is impossible to increase the allocation of any tenant $i$ without either violating feasibility or decreasing the allocation of some tenant $i'$ with $a_{i'} \leq a_i$. This notion is closely related to Rawls’ \emph{difference principle}~\cite{rawls1971theory}, which seeks to maximize the welfare of the least advantaged. MMF ensures that the weakest tenants (i.e., in terms of demands) are made as well off as possible. MMF allocation can be derived through a water-filling procedure; first, sort demands in non-decreasing order, $D_\delta = (d_{\delta(1)}, \dots, d_{\delta(N)})$, with $d_{\delta(1)} \leq \dots \leq d_{\delta(N)}$, then the allocation for tenant $\delta(i)$ is   recursively computed as $a_{\delta(i)} = \min \left(d_{\delta(i)}, \; \frac{C - \sum_{j=1}^{i-1} a_{\delta(j)}}{N - i + 1}\right).$ Intuitively, MMF follows an egalitarian principle by making the allocated amount equitable, and each tenant receives at least $C/N$, unless her demand is smaller, in which case she is allocated her full demand and the surplus is redistributed equally among the remaining tenants. 
Directly applying single-resource allocation rules to multi-resource settings often leads to inefficiency and allocation waste~\cite{MURANESPaper}, since tenants typically require multiple resources in combination rather than in isolation. This motivates the need for multi-resource allocation rules that extend single-resource fairness notions. 

Dominant Resource Fairness (DRF)~\cite{DRFpaper} generalizes MMF to multiple resource types, and was originally proposed for cloud systems. Recent works extends DRF to indivisible allocations~\cite{parkes2015beyond}, multiple servers~\cite{wang2014dominant}, and settings with elastic demands~\cite{chowdhury2016hug}. Other generalizations include axiomatic approaches based on Schur-concave functions~\cite{lan2010axiomatic,wang2014multi}, extensions of proportional fairness~\cite{BMFpaper}, and frameworks unifying multiple notions through ordered weighted averages (OWA)~\cite{MURANESPaper} that have re-adapts several fairness notions to network slicing environments. 

Despite their differences, all these approaches share a common limitation: they assume either linear coupling of resources  or Leontief-type utility preferences ($u_i  = \min_j \frac{a_{ij}}{d_{ij}}$) which treats resources as perfect complements where $u_i$ is the utility of user $i$ to maximize. This assumption severely restricts applicability in modern systems, where more general forms of inter-resource dependency arise.

Dominant Resource Fairness (DRF)~\cite{DRFpaper} generalizes MMF to multi-resource settings under linear proportional dependency assumption while retaining its desirable axiomatic properties. DRF operates by normalizing each tenant's $i$ demand through the demand-to-capacity ratio $s_{ij} = d_{ij}/c_j$ for every resource $j$. For each tenant $i$, the maximum of these demand-to-capacity ratios, $\mu_i = \max_j s_{ij}$, is called tenant $i$'s \emph{dominant share}, and the corresponding resource $b_i = \arg\max_j s_{ij}$ is the \emph{bottleneck resource}. Allocations are then determined by equalizing dominant shares across tenants, thereby extending MMF’s fairness guarantees to multi-resource scenarios.
Formally, DRF for multi-resource settings with linear dependency solves:
\begin{equation}
\label{eq:drf}
\tag{DRF}
\begin{aligned}
\max_{X} \;& \sum_{i = 1}^N x_i  \\ 
\text{s.t.}\;& D^\top X \le C, \\[0.5pt]
& \mu_i x_i = \mu_{j} x_{j}, \quad \forall\, i \ne j \in \{1, \cdots, N\}, \\[0.5pt]
& X \in [0, 1]^N.
\end{aligned}
\end{equation}
DRF maximizes each component in the vector $X$ but from $\mu_ix_i = \mu_j x_j = \lambda \iff x_i = \frac{\lambda}{\mu_i}$ that is increasing in $\lambda$ and no trade-off between the components we can say it  maximizes the sum of all users' satisfaction (Utilization) 
subject to capacity constraints, bounds on the satisfaction (non-negativity and no over-provisioning)
and a fairness constraint on equalizing dominant shares across tenants. DRF is designed to satisfy four key properties: \emph{Share Incentive}, guaranteeing that each tenant receives at least a $1/N$ fraction of her dominant resource if she demands more; \emph{Pareto Efficiency}, ensuring no tenant can be made better off without harming another; \emph{Strategy Proofness}, which prevents tenants from gaining by misreporting their demands; and \emph{Envy Freeness}, meaning every tenant prefers her own allocation to that of others. In essence, DRF reduces the multi-resource allocation problem to a single-resource one by focusing on the dominant share, and then scales allocations proportionally across other resources. Under congestion scenarios, the DRF scheme exhibits two key shortcomings:
\begin {itemize}
\item \textit{\textbf{Existence of weak tenants.}}
As DRF equalizes dominant shares, a \emph{weak tenant} with small demand can limit other users to its dominant share, preventing full resource utilization and leaving capacity idle.\\
\textit{\textbf{Example.}} 
Consider $2$ tenants and $2$ resources with {\footnotesize $D=\begin{bmatrix}9 & 9 \\ 14 & 25\end{bmatrix}$} and $C^\top=[20, 30]$, DRF produces the following allocation {\footnotesize $\begin{bmatrix}9 & 9 \\ 7.56 & 13.5 \end{bmatrix}$}, fully satisfying tenant~1 whereas tenant 2 reaches only $54\%$  satisfaction, even though the resources are not fully used.
However, 
a more efficient allocation than DRF's output is  {\footnotesize $\begin{bmatrix}9  & 9 \\ 11 & 19.6429\end{bmatrix}$} that fully utilizes resource~2 and raises tenant~2 satisfaction to $78.57\%$, strictly improving efficiency and preserving fairness. This can be seen geometrically in Figure~\ref{fig:drf_inefficiency}. The figure depicts the DRF feasible set \emph{with} the demand-bound constraint $x_i \le 1,\ \forall i\in\mathcal N$ (demand-satisfaction allocation) and \emph{without} it (task allocation). The grey and black points correspond to the DRF solutions without and with the constraint, respectively. The orange point indicates an alternative allocation that preserves DRF's properties yet Pareto-dominates the black point, by increasing the second user's allocation without reducing the first user's allocation.

\begin{figure}[t]
\centering
    \includegraphics[width=0.7\columnwidth, keepaspectratio]{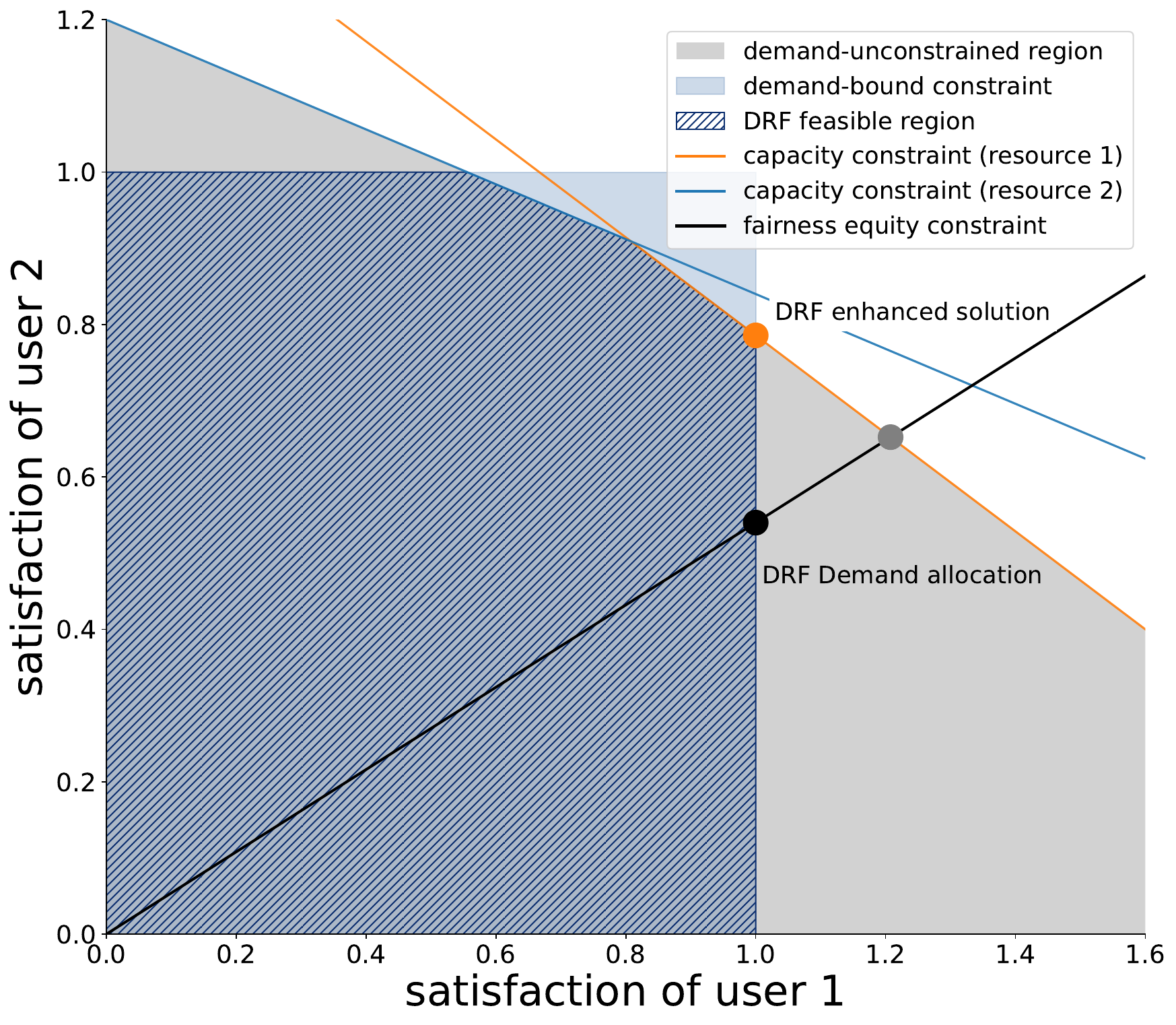}
    \vspace{-0.2cm}
    \caption{DRF inefficiency in demand satisfaction under congestion: allocation capped by demand. Fairness line prevents capacity saturation due to low-demand users; ignoring fairness yields efficient orange point.}
    \label{fig:drf_inefficiency}
\end{figure} 

\item \textit{\textbf{Dominant share on non-congested resources.}}
Finally, DRF may enforce fairness on a non-congested resource rather than the true bottleneck.

\textit{\textbf{Example.}} 
Consider 2 tenants and 2 resources with 
{\footnotesize$D=\begin{bmatrix}6 & 9 \\ 8 & 1\end{bmatrix}$ }
and 
$C^\top=[10,10]$. 
In this case, resource~1 is congested, while resource~2 still has spare capacity. 
DRF outputs the following allocation
{\footnotesize$\begin{bmatrix}4 & 6 \\ 6 & 0.75\end{bmatrix}$}. 
Although this satisfies DRF’s fairness condition, it does so without distinguishing between congested and non-congested resources. Alternatively, equalizing the dominant shares of the \textit{congested} resource (resource~1 in this example) gives 
{\footnotesize $\begin{bmatrix}5 & 7.5 \\ 5 & 0.625\end{bmatrix}$}. 
This adjustment aligns fairness with the actual bottleneck of the system, emphasizing that dominant shares should be defined with respect to congested resources rather than globally across all resources. 
\end{itemize}

To evaluate the efficiency of a resource allocation problem, the literature often defines efficiency as $\sum_i U_i(X)$, where $U$ is a strictly concave, increasing utility function that captures diminishing returns and fairness. A more general formulation is provided by the \textit{Ordered Weighted Averaging (OWA)} operator, adopted in the Multi-Resource Allocation for Network Slicing (MURANES) framework~\cite{MURANESPaper}. It is defined as  $\mathrm{OWA}(\mathbf{z}) = \sum_{i=1}^N w_i z_{\delta(i)},$ where $\delta$ orders the inputs in non-decreasing order and $w$ is a strictly decreasing weight vector. With this choice, OWA is Schur-concave, equitable, and satisfies the Pigou-Dalton transfer principle~\cite{dalton1920measurement}. By tuning the weights, OWA spans objectives from egalitarian $w=(1,0,\dots,0)$ to fully utilitarian $w=(1,1,\dots,1)$.  
Hence the relationship among fairness notions: MMF and PF are both \emph{egalitarian} in spirit, MMF enforces equity at the allocation level, while PF enforces equity at the satisfaction level. In contrast, utilitarian rules maximize total utility without fairness concerns.

In this paper, we address the aforementioned limitations by revisiting multi-resource allocation, which allows us to extend max–min fairness (MMF) to dependency-aware multi-resource settings while preserving its desirable properties.

\begin{table}[!t]
\centering
\renewcommand{\arraystretch}{1.43}
\caption{Summary of the notation used in the paper}
\vspace{-0.2 cm}
\label{notationstable}
\begin{tabular}{|m{0.005cm}|>{\centering\arraybackslash}m{0.7cm}|m{6.9cm}|}
\hline
& \hspace*{-0.15 cm}{\scriptsize \textit{\textbf{Symbol}}} & \multicolumn{1}{c|}{\textit{\textbf{Definition}}} \\
\hline

\multirow{10}{*}{\hspace*{-0.19cm}\rotatebox[origin=c]{-90}{\textit{System Model \ref{SystemModel} + \ref{DDRFAlgorithm}}}}
& {\footnotesize $\mathcal{N}$} & {\scriptsize $\{1,\cdots,N\}$ Set of users} \\
\cline{2-3}
& {\footnotesize $\mathcal{M}$} & {\scriptsize $\{1, \cdots, M\}$ Set of resources } \\
\cline{2-3}
& {\footnotesize $D$} & {\scriptsize $(d_{ij})_{1 \leq i \leq N, \; 1 \leq j \leq M}$, \; $d_{ij}:$ User's $i$ demand for resource $j$ } \\
\cline{2-3}
& {\footnotesize $C$ } & {\scriptsize $(c_j)_{1 \leq j \leq M}$, \; $c_j:$ The available capacity of resource $j$ } \\
\cline{2-3}
& {\footnotesize $X$} & {\scriptsize $(x_{ij})$,  $x_{ij} \in [0, 1]$: User's $i$ satisfaction for resource $j$ } \\
\cline{2-3}
&   {\footnotesize $K_i$} & {\scriptsize $\in \mathbb{N} \cup \{0\}$, Number of dependency functions of user $i$}\\
\cline{2-3}
& {\footnotesize $\mathcal{F}_i$} & {\scriptsize $ \Big\{
f_i^{(k)}\!\big(\{x_{ij}\}_{j \in S_i^{(k)}}\big) = 0 \; (\text{or } \leq 0) \; : \; k=1,\dots,K_i
\Big\},$  Set of dependency constraints coupling the resources of user $i$}  \\
\cline{2-3}
& {\footnotesize $S_{i}^{(k)}$ } & {\scriptsize $\subseteq \mathcal{M}$ Subset of coupled resources for user $i$ at constraint $k$  by $f_i^{(k)}$ } \\
\cline{2-3}
& {\footnotesize $A$}  & {\scriptsize $(a_{ij})$, $a_{ij} = d_{ij}x_{ij}$ The allocated amount of resource $j$ to user $i$ } \\
\cline{2-3}
& {\footnotesize $\lambda_j$ } & {\scriptsize $\in \mathbb{R}_+$ largest fully allocatable demand for resource $j$ under MMF} \\
\cline{2-3}
& {\footnotesize $y_{ij}$}  &  {\scriptsize $\mathbf{1}[d_{ij}>\lambda_j]$: indicator that user $i$'s demand for resource $j$ is active} \\
\cline{2-3}
&   $\mathcal{S}_i$     &  {\scriptsize Family of maximal dependency groups induced by $\{S_i^{(k)}\}_{k=1}^{K_i}$}  \\
\cline{2-3}
 & {\footnotesize $S$ } & { \scriptsize $(s_{ij}): s_{ij} = \frac{d_{ij}}{c_j}$ User $i$'s share of resource $j$} \\
 \cline{2-3}
 & {\footnotesize $\mu_i$}    & {\scriptsize $\max_j s_{ij}$ User $i$' dominant share} \\
\hline

\multirow{10}{*}{\hspace*{-0.15cm}\rotatebox[origin=c]{-90}{\textit{Theoretical Part \ref{TheoreticalPart}}}}
& {\footnotesize $b_i$}  & {\scriptsize $\in \argmax_{j \in \mathcal{M}} s_{ij}$ User $i$'s bottleneck resource} \\
\cline{2-3}
&  {\footnotesize $\mathcal{C}$}  & {\scriptsize $\{j \in \mathcal{M}: \sum_{i} d_{ij} > c_j\}$ The set of congested resources}    \\
\cline{2-3}
&   {\footnotesize $\mu_i^{\mathcal C}$} & {\scriptsize $\max_{j \in \mathcal{C}} s_{ij}$ User $i$'s dominant-share over congested resources} \\
\cline{2-3}
&   {\footnotesize $b_i^\mathcal{C}$}  &  {\scriptsize $\min \argmax_{j \in \mathcal{C}} s_{ij}$ User $i$' Congested bottleneck resource} \\ 
\cline{2-3}
& {\footnotesize $j'_i$} & {\scriptsize $\min \arg\max_{j\in\mathcal C:\,y_{ij}=1} s_{ij}$ User $i$' active congested bottleneck } \\
\cline{2-3}
& {\footnotesize $X^\star$}   &  {\scriptsize Optimal solution to \ref{DDRF} or \ref{utilitarian} } \\
\cline{2-3}
& {\footnotesize $\mathcal{A}_i(j')$} & {\scriptsize $\{\, k \in \{1,\dots,K_i\} : j' \in S_i^{(k)} \text{ and } f_i^{(k)}(X^\star)=0 \,\}$ active dependency-constraint indices of user $i$ containing $j'$ at $X^\star$} \\
\cline{2-3}
& {\footnotesize $\mathcal{B}\mathcal{C}$} & {\scriptsize $\{i \in \mathcal{N}: b_i \in \mathcal{C}\}$ Users with congested bottleneck resource} \\
\cline{2-3}
& {\footnotesize $\mathcal{B}\mathcal{N}\mathcal{C}$ } & {\scriptsize $\mathcal{N} \backslash \mathcal{B}\mathcal{C}$ Bottleneck-noncongested users} \\
\cline{2-3}
& {\footnotesize $\mathcal{W}$ } & {\scriptsize $\{i \in \mathcal N: \forall j \in \mathcal C y_{ij} = 0\}$ } Set of weak users \\
\cline{2-3}
& {\footnotesize $\mathcal{A}$ } & {\scriptsize $\mathcal N \backslash \mathcal W $ Set of active users } \\
\cline{2-3}
& \hspace*{-0.2 cm}{\footnotesize $M_1(\alpha; z)$ } & {\scriptsize $ M_1(\alpha; z) = \frac{\sum_i \alpha_i z_i}{\sum_i \alpha_i}$ with $\alpha, z \in \mathbb{R}^N$ The weighted mean } \\
\hline

\multirow{4}{*}{\hspace*{-0.15cm}\rotatebox[origin=c]{-90}{\textit{Simulation \ref{sec:simulation}}}}
& {\footnotesize $CP$ } & {\scriptsize $[0, 1]^M$ Congestion profile, $CP_j$ available proportion of $\sum_i d_{ij}$ } \\
\cline{2-3}
& \hspace*{-0.2 cm}{\footnotesize $\mathcal{E}(X, \mathcal{F})$ } & {\scriptsize $\left\{ e \in [0,1]^{N \times M} \;\middle|\; 0 \leq e_{ij} \leq X_{ij}, \; \forall i,j, \;\; e \in \mathcal{F} \right\}$ Effective satisfaction region} \\
\cline{2-3}
    & \hspace*{-0.2 cm}{\footnotesize  $X^{\text{effective}}$} & {\scriptsize $\argmax_{e \in \mathcal{E}(X, \mathcal{F})} \;\; \sum_{i \in \mathcal{N}} \sum_{j \in \mathcal{M}} e_{ij}$ Effective satisfaction} \\
\hline

\end{tabular}
\end{table}

\section{System model and problem formulation}
\label{SystemModel}
We consider a network slicing problem where each tenant manages an isolated slice made up of virtualized resources that are provided by one or more infrastructure providers (InPs). Tenants may correspond to MVNOs  (Mobile Virtualized Network Operator) who does not have their own network infrastructure, service providers, vertical industries, or end users with specific QoS requirements. InPs manage the underlying physical infrastructure and allocate resources across slices. Network slicing thus enables customized, isolated virtual networks while ensuring efficient infrastructure sharing and meeting service-level guarantees. For convenience, the main notation used in the paper is summarized in Table~\ref{notationstable}. 

Our system model is shown in Figure~\ref{fig:network_slicing}. Let $\mathcal{N} = \{1, \dots, N\}$, $\mathcal{M} = \{1, \dots, M\}$, and $\mathcal{P} = \{1, \dots, P\}$ be tenants, resources, and infrastructure providers (InPs). 

\begin{figure}
\centering
    \includegraphics[width=\columnwidth]{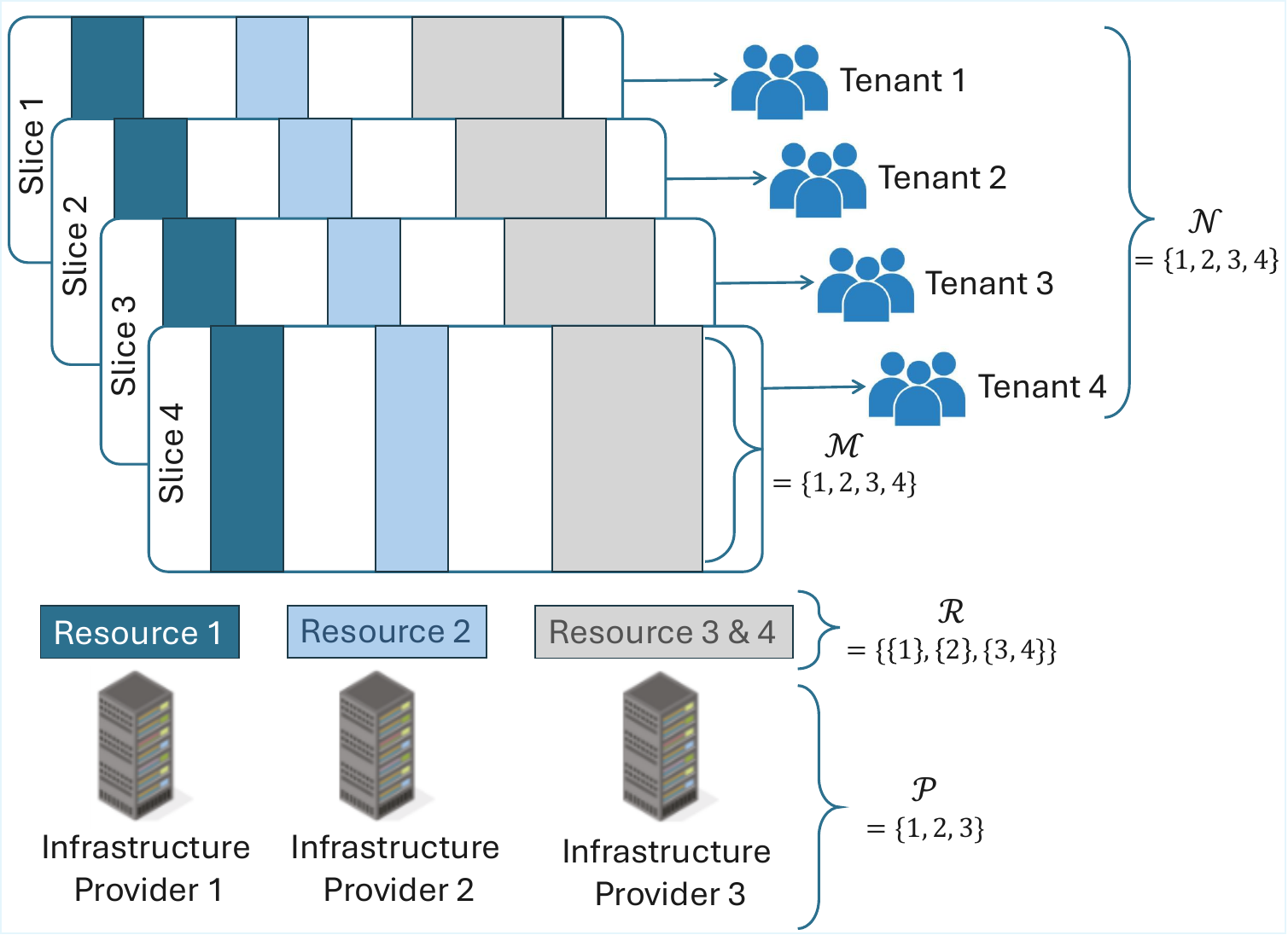}
    \vspace{-0.7cm}
    \caption{A system model example with  4 tenants, 3 InPs, 4 resources.}
    \label{fig:network_slicing}
\end{figure} 

The tuple $(D, C, \mathcal{F})$ defines the dependency-aware multi-resource allocation problem, with demand matrix $D \in \mathbb{R}_+^{N \times M}$, where $d_{ij}$ denotes tenant $i$'s demand for resource $j$. The available capacities are given by a vector $C \in \mathbb{R}_+^M$, where $c_j$ is the total capacity of resource $j$. We adopt Per-Resource Satisfaction\footnote{Unless otherwise noted, all subsequent definitions adopt this formulation.},   
with $X \in \mathbb{R}^{N \times M}$, where each $x_{ij} \in [0,1]$ denotes the fraction of demand $d_{ij}$ satisfied for tenant $i$ on resource $j$. 
This more general formulation captures arbitrary, potentially non-linear couplings between resources. It enables us to define satisfaction per resource, which is essential for modeling dependencies beyond simple proportionality and is the foundation of our approach. Each tenant $i$ specifies a collection of $K_i$ functional constraints:
$$
\mathcal{F}_i \triangleq \Big\{
f_i^{(k)}\!\big(\{x_{ij}\}_{j \in S_i^{(k)}}\big) = 0 \; (\text{or } \leq 0) \; : \; k=1,\dots,K_i
\Big\},
$$
where each constraint captures a dependency among a subset of resources 
$S_i^{(k)} \subseteq \mathcal{M}$.
An allocation is \emph{valid} for tenant~$i$ only if all constraints in $\mathcal{F}_i$ are satisfied.
Let $\mathcal{F} = \bigcup_{i=1}^N \mathcal{F}_i$ denote the global set of all dependency constraints. We assume that the fully satisfied demands is a feasible candidate solution (i.e., $x_{ij}=1$ for all $j$) is feasible for all $i, k$, so $f_i^{(k)}\!\big(\{\mathbf{1}\}_{j \in S_i^{(k)}}\big)\, = 0$ $(\text{or } \leq 0) \; \forall \; i, k$, where  $\mathbf{1}$ is the vector of ones, this model assumption ensures that demands are consistent with the underlying dependencies between resources.\\
\textit{\textbf{Example.}} 
Consider tenant~$i$ with demand vector $(2,4,3,6)$ across four resources, subject to the following dependencies:\\
$
f_i^{(1)}(x_{i1}, x_{i2}) = (2x_{i1})^2 - 4x_{i2} = 0 \quad \text{(quadratic)},\\
f_i^{(2)}(x_{i1}, x_{i3}, x_{i4}) = 2x_{i1}\cdot 3x_{i3} - 6x_{i4} = 0 \quad \text{(multiplicative)}.$
These constraints capture nonlinear couplings among resources that cannot be represented by fixed linear ratios. We note that $f^{(1)}_i(1,1) = 0$ and $f_i^{(2)}(1, 1, 1) = 0$ meaning that tenants are rational, so their declared demands are consistent with the specified dependencies; hence, the fully satisfied demand vector is always feasible. Importantly, \emph{monotonicity is not required}: our model admits both increasing and decreasing relationships. The latter capture trade-offs where allocating more of one resource reduces the need for another, while our optimization framework ensures efficiency by maximizing overall satisfaction across all resources. The output of this problem is  given by $A \in \mathbb{R}_+^{N \times M}$ where $a_{ij} = x_{ij} \cdot d_{ij}, \; \forall i \in \mathcal{N}, \; j \in \mathcal{M}$ is the amount of resource $j$ allocated to tenant $i$ such that $\sum_{i = 1}^N a_{ij} \leq c_j, \; \forall \; j  \in \mathcal{M}$.

\begin{algorithm}[tb]
\caption{Full Allocation Cutoff Computation}
\label{alg:thresholds}
\begin{algorithmic}[1]
\Require Demand $D \in \mathbb{R}^{N \times M}_+$, Capacity $C \in \mathbb{R}^M_+$
\Ensure Threshold vector $\boldsymbol{\lambda} \in \mathbb{R}^M$
\For{$j = 1$ to $M$}
    \State $\mathbf{d}_j \gets [d_{1j}, \dots, d_{Nj}]$ \Comment{demands on resource $j$}
    \State Sort $\mathbf{d}_j$: $d_{(0)j} \gets 0 \le d_{(1)j} \le \dots \le d_{(N)j}$
    \State $S \gets 0$; \quad $\textit{found} \gets \textbf{false}$ \Comment{prefix sum; termination}
    \For{$k = 0$ to $N-1$}
        \State $\tilde{\lambda} \gets \dfrac{c_j - S}{N - k}$ \vspace{0.1 cm}
        \If{$d_{(k)j} \le \tilde{\lambda} \le d_{(k+1)j}$}
            \State $\lambda_j \gets \tilde{\lambda}$; \quad $\textit{found} \gets \textbf{true}$; \quad \textbf{break}
        \EndIf
        \State $S \gets S + d_{(k+1)j}$
    \EndFor
    \If{\textit{found} $=\textbf{false}$}
        \State $\lambda_j \gets d_{(N)j}$ \Comment{all demands can be fully satisfied}
    \EndIf
\EndFor
\State \Return $\boldsymbol{\lambda} = [\lambda_1, \dots, \lambda_M]$
\end{algorithmic}
\end{algorithm}

We consider a centralized setting in which a central entity receives $(D, C, \mathcal{F})$ and computes the allocation. For simplicity, we adopt the standard assumption that resources are divisible\footnote{We assume each resource is exclusively managed by a single provider to avoid inter-provider contention. Modeling shared-resource scenarios is out of the scope of this paper.}.

Our approach aims at maximizing system efficiency interpreted as the linear sum of user utilities. In our network slicing model, utility is derived directly from user satisfaction on the allocated resources. The utility of user $i$ is defined as $U_i = \sum_{j \in \mathcal{M}} x_{ij}$, with $0 \leq U_i \leq M$ for all $i \in \mathcal{N}$. The total system efficiency, or \textit{user-centric efficiency}, is given by $ \sum_{i \in \mathcal{N}} U_i = \sum_{i \in \mathcal{N}} \sum_{j \in \mathcal{M}} x_{ij}$.

To address one of the main limitations of the DRF solution, namely the presence of weak tenants that lead to inefficiencies, we first formalize the notion of \emph{weak tenants}.

To do so, we compute a threshold $\lambda_j$ for each resource $j \in \mathcal{M}$. Intuitively, $\lambda_j$ represents the largest per-resource demand that can be fully satisfied under the Max-Min Fairness (MMF) principle. Formally, for each resource $j$ with demand vector $[d_{1j}, \dots, d_{Nj}]$ and capacity $c_j$, the threshold $\lambda_j$ is defined such that all tenants with $d_{ij} \leq \lambda_j$ receive their full demand.
\begin{definition}[Weak Tenants]
A tenant $i$ is classified as \emph{weak} if $d_{ij} \leq \lambda_j$ for all $j \in \mathcal{M}$. 
\end{definition}

Weak tenants are automatically satisfied, while only the remaining \emph{active tenants} participate in the fairness constraint. The thresholds $\lambda_j$ are computed using Algorithm~\ref{alg:thresholds}, which follows the MMF water-filling procedure with time complexity $O(MN \log N)$.

\section{Dependency-aware Dominant Resource Fairness (DDRF)}
\label{ddrf}

Motivated by the aforementioned objectives: (i) resolving DRF’s inefficiency and (ii) accommodating generic inter-resource dependencies, we propose a resource-allocation framework for congested systems with arbitrary inter-resource dependencies that overcomes the key limitations of DRF.
\vspace{-0.3 cm}

\subsection{Modeling Coupled Resources via Dependency Groups}
\label{DDRFAlgorithm}
 
 As a first step, we introduce the concept of \emph{dependency groups}, an abstraction used purely for algorithmic purposes. Although in practical systems such as cloud and network slicing it is unrealistic for group of resources to be coupled independently of other groups, this abstraction enables a unified and tractable formulation of our algorithm.
 
 \begin{definition}[User Dependency Family]
 Given a user $i$, her dependency constraints $\mathcal{F}_i$ and $S_i^{(k)}$ the set of coupled resources of each constraint $k$.  The Dependency Family $\mathcal{S}_i$ is formed by merging all overlapping $S_i^{(k)}$ over constraints $k$, each resulting set is a \emph{dependency group}. Disjoint interdependent sets form separate groups, and unconstrained resources appear as singletons.
 \end{definition}

These families, derived from $\mathcal{F}$, enable \textsc{DDRF} to enforce fairness within each coupled set of resources, even when coupling is only partial across $\mathcal{M}$ .

\emph{\textbf{Example.}} Suppose tenant $i$ requests $7$ resources with the following dependency constraints:\\
$
2(d_{i2}x_{i2}) + 5 - d_{i4}x_{i4} = 0, \qquad
8(d_{i1}x_{i1})^3 - d_{i6}x_{i6} = 0, \\
3(d_{i1}x_{i1})^2 + (d_{i5}x_{i5})(d_{i3}x_{i3}) = 20.\\
$
Then, $\mathcal{S}_i = \{\{2,4\}, \{1,3,5,6\}, \{7\}\}$.

\textit{\textbf{Special cases.}} \textit{Linear dependency:} if $x_{ij}=x_{ik}$ for all $j,k \in \mathcal{M}$, then $\mathcal{S}_i=\{\mathcal{M}\}$ i.e., all resources are coupled in a single dependency group. \textit{No dependency:} if all resources are independent, then $\mathcal{S}_i=\{\{j\}: j \in \mathcal{M}\}$.

\begin{algorithm}[t]
\caption{Fairness Parameters of DDRF}
\label{alg:DDRF}
\begin{algorithmic}[1]
\Require Demand $D \in \mathbb{R}^{N \times M}_+$, Capacity $C \in \mathbb{R}^M_+$, Dependency $\mathcal{F}$
\Ensure Fairness parameters $(\hat{x}, \hat{\mu}, \hat{y})$
\State Compute normalized shares $s_{ij} \gets d_{ij}/c_j$
\State Compute thresholds $\lambda_j$ for each resource $j$ (Alg.~\ref{alg:thresholds})
\State Activity matrix $y_{ij} \gets \mathbf{1}[d_{ij} > \lambda_j]$
\State For each tenant $i$, obtain dependency family $\mathcal{S}_i$
\For{$i=1$ to $N$}
  \ForAll{$S \in \mathcal{S}_i$}
    \State $\mathcal{J} \gets \{ j \in S : y_{ij}=1 \}$ \Comment{active indices in $S$}
    \If{$\mathcal{J}=\emptyset$}
        \State $\mathcal{J} \gets S$ \Comment{any index in $S$ can be chosen}
    \EndIf
    \State $j^\star \gets \min \arg\max_{j \in \mathcal{J}} s_{ij}$
    \ForAll{$j \in S$}
        \State $\hat{y}_{ij} \gets y_{i j^\star}$ \Comment{inherit activity from $j^\star$}
        \State $\hat{\mu}_{ij} \gets s_{i j^\star}$ \Comment{inherit dominant share}
        \State $\hat{x}_{ij} \gets x_{i j^\star}$ \Comment{link satisfactions}
    \EndFor
  \EndFor
\EndFor
\State \Return $\hat{x}, \hat{\mu}, \hat{y}$
\end{algorithmic}
\end{algorithm}

The \textsc{DDRF} allocation problem can be formulated as the following optimization program:
\vspace{-0.1 cm}
\begin{equation}
\begin{aligned}
\max \; & \sum_{i=1}^N \sum_{j=1}^M x_{ij}  \\
\text{s.t.} \quad 
    & \sum_{i=1}^N d_{ij} x_{ij} \le c_j, \forall j \in \mathcal{M}, \hspace{1.96 cm} (1)
\\
& X \in \mathcal{F}, \hspace{4.58 cm} (2)\\
& \forall\, i\neq k\in\mathcal N,\; j\in\mathcal M: \hspace{2.4 cm} (3) \quad \\
& \quad \Delta_{ikj} = \max\!\big(\hat{\mu}_{ij},\hat{\mu}_{kj}\big)\big(1-\hat{y}_{ij}\hat{y}_{kj}\big), 
\\
& \quad \hat{\mu}_{ij}\hat{x}_{ij} - \hat{\mu}_{kj}\hat{x}_{kj} \le\Delta_{ikj},
\\
& \quad \hat{\mu}_{ij}\hat{x}_{ij} - \hat{\mu}_{kj}\hat{x}_{kj} \ge - \Delta_{ikj}, 
\\
& \hat{x}_{ij} \ge 1-\hat{y}_{ij}, \quad \forall\, i\in\mathcal{N},\; j\in\mathcal{M} \hspace{0.95 cm} (4) \\
& 0 \le x_{ij} \le 1, \forall i \in \mathcal{N},\; j \in \mathcal{M}. \hspace{1.45 cm} (5)
\end{aligned}
\tag{DDRF}
\label{DDRF}
\end{equation}

Here $(\hat{x}, \hat{\mu}, \hat{y})$ denote the satisfaction, dominant-share, and activity values inherited from each dependency group's active dominant resource (see Algorithm~\ref{alg:DDRF}).

The optimization program~\eqref{DDRF} maximizes tenants' satisfaction $\sum_{i\in\mathcal{N}}\sum_{j\in\mathcal{M}} x_{ij}$ subject to five classes of constraints.
(1) the capacity constraints ensure that the allocated amount of each resource does not exceed its available capacity.
(2) the dependency constraints $X \in \mathcal{F}$ enforce tenant-specific couplings across resources, so allocations remain consistent with the declared inter-resource relations.
(3) the fairness constraints implement the DDRF generalization of DRF. For each tenant, Algorithm~\ref{alg:DDRF} maps dependency group to its active dominant (bottleneck) resource and represents the group satisfaction accordingly. DDRF then enforces fairness equity $\hat{\mu}_{ij}\hat{x}_{ij}=\hat{\mu}_{kj}\hat{x}_{kj}$ when the corresponding groups are active (i.e., both tenants have at least one active resource in the group); when a group has no active resource, the constraint is relaxed and (4) full satisfaction is activated within that group.
(5) the bounds $0 \le x_{ij} \le 1$, ensure non-negative satisfaction and prevent over-provisioning beyond demand.

Algorithm~\ref{alg:DDRF} constructs the fairness parameters by defining the activity matrix
$y_{ij}=\mathbf{1}[d_{ij}>\lambda_j]$, where $y_{ij}=1$ indicates that tenant~$i$ is \emph{active} on resource~$j$ (its demand exceeds the MMF cutoff $\lambda_j$), and $y_{ij}=0$ otherwise.
Together with the normalized demand shares $s_{ij}=d_{ij}/c_j$, this allows \textsc{DDRF} to select, within each dependency group $S\in\mathcal{S}_i$, a \emph{representative} resource index $j^\star$ that drives the group’s fairness treatment.
Specifically, if the group contains at least one active resource, $j^\star$ is chosen as the active resource with the largest $s_{ij}$; ties are broken by taking the smallest index.
If no resource in the group is active, then the group is deemed \emph{inactive} (and the tenant is weak on that group), so $\hat{y}_{ij}=0$ for all $j\in S$ and no fairness equalization is enforced for that group; moreover, its full satisfaction is guaranteed.
The representative index $j^\star$ induces the inherited quantities used in~\eqref{DDRF}: for all $j\in S$, we set $\hat{y}_{ij}=y_{i j^\star}$, $\hat{\mu}_{ij}=s_{i j^\star}$, and $\hat{x}_{ij}=x_{i j^\star}$, so that the group is effectively governed by a single satisfaction variable.

\vspace{-0.3 cm}

\subsection{DDRF Properties}
\label{TheoreticalPart}
We now analyze the properties and features of \textsc{DDRF}.

\subsubsection{Pareto Efficiency}
A key requirement for any resource allocation algorithm is \emph{Pareto efficiency}: no user can be made better off without making another worse off. In multi-resource settings under congestion, a necessary and sufficient condition for Pareto efficiency is saturation i.e. to fully allocate as much as possible of the congested resources (otherwise, some tenant's satisfaction can be increased without harming others).

We assume $\mathcal{F}$ satisfies two mild properties: local regularity and the existence of an improving feasible direction. 
We first present the main theorem and a supporting lemma, then state the intuition behind these assumptions.

\begin{theorem}[Pareto-Efficiency of \textsc{DDRF}]\label{theorem1}
Under standard regularity assumptions on $\mathcal{F}$, every optimal solution of~\eqref{DDRF} for $(D, C, \mathcal{F})$ saturates at least one congested resource. Hence \textsc{DDRF} is Pareto efficient over its feasible set.
\end{theorem}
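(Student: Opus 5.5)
We argue by contradiction. Let $X^\star$ be an optimal solution of~\eqref{DDRF} and suppose no congested resource is saturated, i.e. $\sum_i d_{ij}x^\star_{ij} < c_j$ for every $j \in \mathcal{C}$. The goal is to build a feasible $\tilde X$ with strictly larger objective $\sum_{i,j}\tilde x_{ij}$.

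First we show that some active user is not fully satisfied on an active congested group. For every non-congested resource $j$ we have $\sum_i d_{ij} \le c_j$, so capacity (1) never binds there. Every inactive group is pinned to $\hat x_{ij}=1$ by (4). If all active groups also had $\hat x_{ij}=1$, then every congested resource $j$ would receive $\sum_i d_{ij} > c_j$, which violates (1). Hence there is an active user $i$ with active congested bottleneck $j'_i$ and $x^\star_{ij'_i}<1$. Among such users, the fairness constraints (3) force the quantity $t := \hat\mu_{ij}\hat x_{ij}$ to be common to all active groups that share a resource $j$.

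Next we perturb along the fairness line. For a small $\varepsilon>0$, set the new common level to $t+\varepsilon$ for all active groups coupled through (3), so that $\hat x_{kj}$ increases by $\varepsilon/\hat\mu_{kj}$ for each of them. Inactive groups stay at $1$. Because $X^\star$ is optimal, any such group already at $\hat x=1$ would block this step, so we restrict the increase to the connected component of the ``shares a congested resource'' relation that contains user $i$. Groups already at $1$ lie outside the active set: equal dominant shares together with $\hat x=1$ would mean full demand, and the MMF cutoff $\lambda_j$ is designed to exclude exactly that. The representative variable $\hat x$ moves, and the remaining coordinates of each group must move so that $X\in\mathcal F$ still holds. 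This is where the regularity assumptions enter. By local regularity (the Jacobian of the active constraints $\mathcal A_i(j')$ with respect to the non-representative variables has full rank, or an implicit-function argument applies), the group coordinates $x_{ij}$, $j\in S$, can be written as continuous functions of $\hat x_{ij'}$ near $X^\star$. The improving-feasible-direction assumption supplies a direction $v$ with $\sum_{j\in S} v_j>0$, so the objective strictly increases.

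Finally we check the remaining constraints. Capacity (1) holds for small $\varepsilon$ on congested resources by the strict slack assumed for contradiction. On non-congested resources it holds automatically, since $x\le 1$ and $\sum_i d_{ij}\le c_j$. The bounds (5) hold for small $\varepsilon$ because the representative variables are $<1$; for the coupled coordinates this is part of the regularity assumption. Constraint (4) is untouched. So $\tilde X$ is feasible with a strictly larger objective, which contradicts optimality. Therefore some congested resource is saturated. Pareto efficiency over the feasible set then follows by the same perturbation logic. If some user could be improved without hurting others while remaining feasible, then a slack direction would exist and the objective would increase. Alternatively, since the objective is a strictly increasing sum, any Pareto improvement within the feasible set raises the objective, which contradicts optimality directly.

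The main obstacle is the second step: propagating the increase of the representative satisfaction through the nonlinear dependency constraints while simultaneously keeping all coupled fairness equalities (3) satisfied across users who share resources. I would first handle it by formalizing the two assumptions precisely as (a) a local implicit-function parametrization of each group by its representative variable, and (b) the existence of a direction with positive objective derivative. Then I would treat the joint increase of the whole connected component as a one-parameter curve $\varepsilon\mapsto \tilde X(\varepsilon)$.
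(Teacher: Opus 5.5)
Your skeleton is the paper's: argue by contradiction, move every active representative by $\varepsilon/\hat\mu$ so the equalities (3) survive, restore the dependency constraints with the implicit-function and improving-direction assumptions of Appendix~\ref{appendixA}, and take $\varepsilon$ below the capacity slack and the distance to the bounds. You go further than the paper by treating a whole fairness component instead of one congested resource with two active tenants.

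\textbf{The gap.} The step that carries the general case is false. You claim no group in the component can already sit at $\hat x=1$ because of the MMF cutoff. But $\lambda_j$ is computed resource by resource, while (3) equalizes $\hat\mu\hat x$ across users whose representative resources can differ. So the active user with the smallest $\hat\mu$ in the component can reach its bound while every congested resource still has slack.

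\emph{Counterexample.} Take linear dependencies ($x_{i1}=x_{i2}$), $c=(10,10)$, $d_A=(8,1)$, $d_B=(4,40)$.
\begin{itemize}
\item Algorithm~\ref{alg:thresholds} gives $\lambda=(6,9)$. So $A$ is active only on resource 1 ($\hat\mu_A=0.8$) and $B$ only on resource 2 ($\hat\mu_B=4$).
\item Constraint (3) forces $x_B=0.2\,x_A$, and the capacities become $8.8\,x_A\le 10$ and $9\,x_A\le 10$.
\item The unique optimum is $x_A=1$, $x_B=0.2$. It uses only $8.8$ and $9$ units of two congested resources of capacity $10$.
\end{itemize}
The Appendix~\ref{appendixA} condition holds for both users ($g_{j'}g_r=-1<1=g_r^2$). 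Your first step correctly finds $B$ with $x_B<1$, but (3) ties $B$ to $A$. The only fairness-preserving direction pushes $x_A$ above $1$, so your $\bar\varepsilon$ is $0$.

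\textbf{Consequence.} This cannot be patched: the saturation claim itself fails at this generality. A perturbation argument like yours can only give the weaker alternative: either some congested resource is saturated, or some active group tied to the unsatisfied one is already at $\hat x=1$.

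\textbf{Why the paper's proof does not hit this.} Its ``without loss of generality'' reduction to a single congested resource with two active tenants makes both representatives equal to $j'$. Then (3) equalizes allocations on $j'$, and the cutoff $\lambda_{j'}$ keeps active users below full demand, provided inactive users are fully served on $j'$. That reduction is not actually without loss of generality. The paper's own proof of Theorem~\ref{theorem2} already contains the bad case: the term $\min_i\mu_i^{\mathcal C}$ (the fictitious resource $0$) in $t^{\mathcal C}$ is exactly the situation where the bound (5), not a capacity, is binding.

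\textbf{What to keep.} Your closing remark is correct. Pareto efficiency over the feasible set follows directly because the objective strictly increases in every $x_{ij}$, with no saturation argument needed.
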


\begin{definition}[Dependency-Aware Utilitarian Framework]
By dropping the fairness constraint in~\eqref{DDRF}, we obtain the following dependency-aware utility maximization problem:
\begin{equation}
\begin{aligned}
\max \; & \sum_{i=1}^N \sum_{j=1}^M x_{ij}  \\
\text{s.t.} \quad 
& \sum_{i=1}^N d_{ij} x_{ij} \le c_j, && \forall j \in \mathcal{M}, \\
& X \in \mathcal{F}, \\
& 0 \le x_{ij} \le 1, && \forall i \in \mathcal{N},\; j \in \mathcal{M}.
\end{aligned}
\tag{D-Util}
\label{utilitarian}
\end{equation}
\end{definition}

\begin{lemma}[Pareto-Efficiency of~\ref{utilitarian}]\label{lemma1}
Under the same assumptions on $\mathcal{F}$ as in Theorem~\ref{theorem1}, every optimal solution of~\ref{utilitarian} saturates at least one congested resource, and is Pareto efficient.
\end{lemma}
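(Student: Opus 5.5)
We argue by contradiction, combining the two regularity assumptions on $\mathcal{F}$ (local regularity and existence of an improving feasible direction) with the fact that the objective of \ref{utilitarian} is the plain sum $\sum_{i,j} x_{ij}$.

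Let $X^\star$ be optimal for \ref{utilitarian}. Suppose that no congested resource is saturated, i.e. $\sum_i d_{ij}x^\star_{ij} < c_j$ for every $j\in\mathcal{C}$. Non-congested resources can never be the binding obstacle, since $\sum_i d_{ij} \le c_j$ for $j\notin\mathcal{C}$, so their capacity constraints are slack whenever $x_{ij}<1$ for some $i$. We must first show that some coordinate is strictly below one. Since some $j\in\mathcal{C}$ has $\sum_i d_{ij}>c_j$, feasibility forces $x^\star_{ij}<1$ for some $i$ with $d_{ij}>0$. So $X^\star \neq \mathbf{1}$.

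The key step is the improving direction. Pick a user $i$ with some $x^\star_{ij}<1$. The improving-direction assumption should supply a direction $v$ with the following properties:
\begin{itemize}
\item $v$ is supported on user $i$'s coordinates, with $\sum_j v_{j} > 0$;
\item $v$ is tangent to the active constraints of $\mathcal{F}_i$ at $X^\star$ (using the sets $\mathcal{A}_i(\cdot)$);
\item $v$ keeps the inactive inequality constraints valid for small steps.
\end{itemize}
A natural candidate is the direction toward $\mathbf{1}$ within the group, which is feasible at the endpoint $\mathbf{1}$ by the model assumption. Local regularity (e.g.\ a constraint-qualification / implicit-function-type condition) then lets us move along a feasible curve $X(t)=X^\star+tv+o(t)$ inside $\mathcal{F}$ for small $t>0$.

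Along this curve the objective increases at rate $\sum_j v_j>0$. The box constraints $[0,1]$ stay valid for coordinates that were not already at $1$; we arrange $v_j\le 0$ on coordinates already at $1$. All capacity constraints have slack, because the congested ones are unsaturated by hypothesis and the non-congested ones are slack as argued. Hence $X(t)$ is feasible for small $t$ and strictly better than $X^\star$, contradicting optimality. Therefore at least one congested resource is saturated.

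Pareto efficiency then follows by the standard argument. If some feasible $X'$ weakly improved every user's utility $U_i$ and strictly improved at least one, then $\sum_{i,j} x'_{ij} > \sum_{i,j} x^\star_{ij}$, again contradicting optimality. So maximizing the sum already yields Pareto efficiency, and saturation is the structural characterization the paper cares about.

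The main obstacle is the middle step. Because the constraints are nonlinear, non-monotone and possibly equality constraints, we need that a strictly improving feasible direction exists and can be followed in $\mathcal{F}$. I would state the assumption exactly in this form: at any feasible non-full $X$, a feasible curve in $\mathcal{F}\cap[0,1]^{N\times M}$ increasing $\sum_{i,j} x_{ij}$ exists. Local regularity is then used only to turn a first-order direction into an actual curve. This mirrors what the paper's ``improving feasible direction'' assumption presumably encodes.
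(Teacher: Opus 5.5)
Your overall architecture matches the paper's: contradiction, slack on the congested resource(s), non-congested capacities implied by the box, a one-tenant improving perturbation, then Pareto efficiency. Your Pareto step is fine; it follows directly from optimality of $\sum_{i,j}x_{ij}$. The gap is the step you flag yourself: you never derive the improving feasible direction from the lemma's hypotheses.

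Your ``natural candidate'' $v=\mathbf{1}-x_i^\star$ does not work. Feasibility of both endpoints $X^\star$ and $\mathbf{1}$ says nothing about the chord when $\mathcal{F}$ contains nonlinear equalities. For $f=x_{i1}^2-x_{i2}=0$ at $x_i^\star=(0.5,0.25)$ you get $v=(0.5,0.75)$ and $\nabla f=(1,-1)$, so $\nabla f^\top v=-0.25\neq 0$. Along the chord $f=0.25t(t-1)\neq 0$ for $t\in(0,1)$, so it leaves $\mathcal{F}$. Your fallback postulates an improving feasible curve in $\mathcal{F}\cap[0,1]^{N\times M}$ at every feasible non-full $X$. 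That essentially assumes the conclusion, and it is not the lemma's hypothesis. The actual hypothesis is a pointwise first-order condition at $X^\star$ (Appendix A).

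The missing idea is the explicit tangent construction that this condition enables. Fix the congested resource $j'$ and a tenant $i'$ with $x^\star_{i'j'}<1$. The assumptions are imposed only for tenants with an active congested bottleneck, and only on active constraints containing $j'$. So the direction must be anchored on that coordinate, not on an arbitrary coordinate with $x^\star_{ij}<1$.

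If a single constraint $f_{i'}^{(k')}$ with support $S$ is active, let $g=\nabla f_{i'}^{(k')}(X^\star)$ and $J=S\setminus\{j'\}$. Set $\Delta x_{i'j'}=1$ and $\Delta x_{i'r}=-g_{j'}g_r/\sum_{\ell\in J}g_\ell^2$ for $r\in J$. Then $g^\top\Delta x_{i'}=0$. Also $\mathbf{1}^\top\Delta x_{i'}=1-g_{j'}\sum_{r\in J}g_r/\sum_{r\in J}g_r^2>0$, which is exactly the Case~(i) inequality. In your example this gives $\Delta=(1,1)$, tangent with gain $2$.

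Since $\sum_{\ell\in J}g_\ell^2>0$, some $r_0\in J$ has $g_{r_0}\neq 0$. The implicit function theorem in $x_{i'r_0}$ then turns the direction into an exactly feasible curve. A step bounded by $(c_{j'}-\sum_i d_{ij'}x^\star_{ij'})/d_{i'j'}$ and by the box margins keeps capacity and bounds. With several active constraints, Case~(ii) directly supplies $\Delta$ with $J_{\mathrm{act}}(X^\star)\Delta=0$, $\Delta_{j'}>0$ and $\mathbf{1}^\top\Delta>0$.

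Without this construction, or an equivalent concrete condition, your argument proves the lemma only under a hypothesis that makes it tautological.
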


\textit{Local regularity and local improving feasible direction.}
We require that certain functions in \(\mathcal{F}\) satisfy two mild assumptions. 
The first is \emph{local regularity}, meaning that the function and its first-order derivatives are continuous in a neighborhood of the optimal solution \(X^\star\) of~\eqref{DDRF} or~\eqref{utilitarian}; this allows the corresponding partial derivatives to be well defined and used in the analysis. 
The second is a \emph{local improving feasible direction} condition, which ensures that, under the dependency constraints, an increase in one coordinate can be accompanied by compensating variations in the coupled coordinates so that the overall feasible variation remains locally improving. 
The precise identification of the functions in \(\mathcal{F}\) to which these assumptions apply, together with the exact formal statements, is given in Appendix~\ref{appendixA}.

The detailed proofs of Lemma~\ref{lemma1} and Theorem~\ref{theorem1} are deferred to appendix~\ref{appendixB} and \ref{appendixC}.
Intuitively, if congested resources were left unsaturated, then under the assumptions on \(\mathcal{F}\) one could construct a small feasible perturbation, consistent with the dependency constraints, that strictly increases \(\sum_{i,j} x_{ij}\) while preserving feasibility, contradicting optimality.
Hence, both the dependency-aware utilitarian problem and \textsc{DDRF} admit only optimal solutions that saturate at least one congested resource.

\subsubsection{Sufficient Full Utilization}
When low-demand (weak) tenants are present, classical DRF may leave congested resources unsaturated. Under \emph{linear} dependencies, where each tenant receives the same satisfaction level across all resources (i.e., \(x_{ij}=x_i\) for all \(j\)), and with \(\mathcal C\) the set of congested resources as in table~\ref{notationstable}, \textsc{DDRF} reduces to the following scalar formulation \((X\in\mathbb R^N)\):
\[
\begin{aligned}
\max \quad & \sum_{i \in \mathcal{N}} x_i \\
\text{s.t.} \quad & D^\top X \le C, \\
&\mu_i^\mathcal{C} x_i -\mu_k^\mathcal{C} x_k \leq \max(\mu_i^\mathcal{C}, \mu_k^\mathcal{C})(1 -  y_{i b_{i}^\mathcal{C}}y_{k b_{k}^\mathcal{C}}) ,  \; \forall i \neq k\\
&\mu_i^\mathcal{C} x_i -\mu_k^\mathcal{C} x_k \ge -\max(\mu_i^\mathcal{C}, \mu_k^\mathcal{C})(1 -  y_{i b_{i}^\mathcal{C}}y_{k tb_{k}^\mathcal{C}}) ,  \; \forall i \neq k\\
& x_i \ge 1-y_{ib_{i}^\mathcal{C}}, \quad \forall\, i\in\mathcal{N} \\
& 0 \le x_i \le 1, \quad \forall i \in \mathcal{N},
\end{aligned}
\]

where $\mu_i^\mathcal{C}$ the share of the congested bottleneck and $b_i^\mathcal{C}$ the congested bottleneck resource are defined in table~\ref{notationstable}. Only tenants with $y_{i b_i^\mathcal{C}}=1$ (active on their congested bottleneck) are equalized; weak tenants are fully satisfied.

\subsubsection{Congestion-Aware Bottleneck Selection}
In \textsc{DDRF}, fairness is enforced only along \emph{limiting} (congested) resources. 
This anchors each dependency group to its most limiting \emph{active} resource and avoids driving fairness decisions by non-limiting coordinates, which can improve efficiency compared to DRF.

We introduce two classes of users based on their bottleneck resource The first one is users whose bottleneck resource is congested $\mathcal{B}\mathcal{C}$ and the second is the bottleneck-noncongested users $\mathcal{B}\mathcal{N}\mathcal{C}$ defined in table~\ref{notationstable}. Under linear dependencies and $\mathcal C\neq\emptyset$, the following scenarios occur:
\vspace{-0.4 cm}

{\footnotesize
\[
\renewcommand{\arraystretch}{1.18}
\setlength{\tabcolsep}{2.5pt}
\begin{array}{c|c|c}
\textbf{Case} & \textbf{Active} & \textbf{Weak} \\
\hline
\begin{array}{@{}c@{}}
(\mathcal{B}\mathcal{N}\mathcal{C}\\
=\emptyset)
\end{array}
&
\begin{array}{@{}l@{}}
\forall\, i\in\mathcal{B}\mathcal{C}:\\
y_{i b_i}=1
\end{array}
&
\begin{array}{@{}l@{}}
\exists\, i\in\mathcal{B}\mathcal{C}:\\
y_{i b_i}=0
\end{array}
\\
\hline
\begin{array}{@{}c@{}}
(\mathcal{B}\mathcal{N}\mathcal{C}\\
\neq\emptyset)
\end{array}
&
\begin{array}{@{}l@{}}
\forall\, i\in\mathcal{B}\mathcal{N}\mathcal{C}\ \text{and}\ i\in\mathcal{B}\mathcal{C}:\\
\exists\, j\in\mathcal C:\ y_{ij}=1\\
\text{(i)}\ \dfrac{M_1(\alpha^\mathcal{C}; d_{\cdot j^\star})}{c_{j^\star}}\\
\hspace{0.5 cm} \leq
\dfrac{M_1(\alpha; d_{\cdot j'})}{c_{j'}}\\
\text{(ii)}\ \dfrac{M_1(\alpha^\mathcal{C}; d_{\cdot j^\star})}{c_{j^\star}}\\
\hspace{0.5 cm} >
\dfrac{M_1(\alpha; d_{\cdot j'})}{c_{j'}}
\end{array}
&
\begin{array}{@{}l@{}}
\exists\, i\in\mathcal{B}\mathcal{N}\mathcal{C},\ \text{or}\ i\in\mathcal{B}\mathcal{C}:\\
\forall\, j\in\mathcal C:\ y_{ij}=0\\
\text{(i)}\ 
|\mathcal W|+\dfrac{\tilde c_{\tilde j^\star}}
{M_1(\alpha_{\mathcal A}^{\mathcal C};d_{\mathcal A,\cdot \tilde j^\star})}\\
\hspace{0.5 cm} \ge
\dfrac{c_{j'}}{M_1(\alpha;d_{\cdot j'})}\\
\text{(ii)}\ 
|\mathcal W|+\dfrac{\tilde c_{\tilde j^\star}}
{M_1(\alpha_{\mathcal A}^{\mathcal C};d_{\mathcal A,\cdot \tilde j^\star})}
\\
\hspace{0.5 cm}<
\dfrac{c_{j'}}{M_1(\alpha;d_{\cdot j'})}
\end{array}
\end{array}
\]
}

\vspace{-0.2 cm}

Where $\mathcal W$ the set of weak users, $\mathcal A$ the set of active users, $\mu_i$ the dominant share, $\mu_i^{\mathcal C}$ the congested dominant share and $M_1(\beta;z)$ is the weighted mean. They are defined in table~\ref{notationstable}.

Set $\mathcal M_0=\mathcal M\cup\{0\}$ and $d_{i0}=1$ for all $i\in\mathcal N$, and define $d_{\mathcal A,\cdot j}=(d_{ij})_{i\in\mathcal A}$ for each $j\in\mathcal M_0$. Associate the weights $\alpha_i=1/\mu_i$, $\alpha_i^{\mathcal C}=1/\mu_i^{\mathcal C}$, and $\alpha_{\mathcal A}^{\mathcal C}=(\alpha_i^{\mathcal C})_{i\in\mathcal A}$. Next, define $c_0=\big(\min_{i\in\mathcal N}\mu_i\big)\sum_{i\in\mathcal N}\alpha_i$, $c_0^{\mathcal C}=\big(\min_{i\in\mathcal N}\mu_i^{\mathcal C}\big)\sum_{i\in\mathcal N}\alpha_i^{\mathcal C}$, and $c_j^{\mathcal C}=c_j$ for all $j\in\mathcal M$. In the weak case, define the residual capacities $\tilde c_j=c_j-\sum_{i\in\mathcal W}d_{ij}$ for all $j\in\mathcal M$, and $\tilde c_0=\big(\min_{i\in\mathcal A}\mu_i^{\mathcal C}\big)\sum_{i\in\mathcal A}\alpha_i^{\mathcal C}$. Finally, define $j^\star\in\arg\min_{j\in\mathcal M_0}\frac{M_1(\alpha^{\mathcal C};d_{\cdot j})}{c_j^{\mathcal C}}$, $j'\in\arg\min_{j\in\mathcal M_0}\frac{M_1(\alpha;d_{\cdot j})}{c_j}$, and $\tilde j^\star\in\arg\min_{j\in\mathcal M_0}\frac{M_1(\alpha_{\mathcal A}^{\mathcal C};d_{\mathcal A,\cdot j})}{\tilde c_j}$.

\begin{theorem}\label{theorem2}
Under linear dependencies, let $x^{\mathrm{DDRF}}$ and $x^{\mathrm{DRF}}$ denote the solutions of DDRF and DRF, respectively. 
Then $\sum_{i\in\mathcal N} x_i^{\mathrm{DDRF}} \ge \sum_{i\in\mathcal N} x_i^{\mathrm{DRF}}$ in all scenarios except in the two cases labeled (ii) under $\mathcal{B}\mathcal{N}\mathcal{C}\neq\emptyset$ (Weak and Active).
\end{theorem}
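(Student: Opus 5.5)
I will work in the scalar linear-dependency setting, where both programs become one-parameter problems. In DRF, the constraint $\mu_i x_i=\lambda$ gives $x_i=\lambda/\mu_i=\alpha_i\lambda$. Substituting into $D^\top X\le C$ and the bounds $x_i\le 1$ shows the DRF level is
\[
\lambda^{\mathrm{DRF}}=\min_{j\in\mathcal M_0}\frac{c_j}{\sum_i\alpha_i d_{ij}}.
\]
The auxiliary resource $j=0$, with $d_{i0}=1$ and $c_0=(\min_i\mu_i)\sum_i\alpha_i$, is designed exactly to encode the demand bound $\max_i\alpha_i\lambda\le 1$. Hence $\sum_i x_i^{\mathrm{DRF}}=\lambda^{\mathrm{DRF}}\sum_i\alpha_i=c_{j'}/M_1(\alpha;d_{\cdot j'})$. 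I will first establish this closed form as a lemma, by checking that the argmin over $\mathcal M_0$ matches the binding constraint.

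\textbf{Case $\mathcal{BNC}=\emptyset$, Active.} Every user is active on a congested bottleneck, so no fairness constraint is relaxed. Every bottleneck is congested, so $\mu_i^{\mathcal C}=\mu_i$. The DDRF scalar program then coincides with DRF, and the two sums are equal.

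\textbf{Case $\mathcal{BNC}=\emptyset$, Weak.} The DRF solution is feasible for DDRF:
\begin{itemize}
\item It satisfies the equality constraints among active users.
\item A weak user has $d_{ij}\le\lambda_j$ on her bottleneck, so forcing $x_i=1$ for her stays within capacity. To check this, I compare with the MMF water level: DRF gives her $x_i\le1$, and raising her to $1$ consumes only capacity that MMF deems allocatable.
\end{itemize}
The delicate part is that raising weak users may violate capacity for others. My plan is instead to argue monotonicity: DDRF maximizes over a feasible set containing the DRF point, modified on weak users, with $x_i^{\mathrm{DRF}}$ for active users scaled so capacity holds. If this comparison is awkward, I fall back to the explicit formula $|\mathcal W|+\tilde c_{\tilde j^\star}/M_1(\cdot)$ below.

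\textbf{Case $\mathcal{BNC}\neq\emptyset$.} Here I will compute the DDRF value in closed form the same way.
\begin{itemize}
\item \emph{Active subcase.} DDRF sets $x_i=\alpha_i^{\mathcal C}\lambda$, which yields $\sum_i x_i^{\mathrm{DDRF}}=c_{j^\star}^{\mathcal C}/M_1(\alpha^{\mathcal C};d_{\cdot j^\star})$, using $c_0^{\mathcal C}$ for the demand bound.
\item \emph{Weak subcase.} Weak users are fixed at $1$, consuming $\sum_{i\in\mathcal W}d_{ij}$ and leaving residual capacity $\tilde c_j$. The active users are equalized, giving $|\mathcal W|+\tilde c_{\tilde j^\star}/M_1(\alpha^{\mathcal C}_{\mathcal A};d_{\mathcal A,\cdot\tilde j^\star})$.
\end{itemize}
Comparing each expression with $c_{j'}/M_1(\alpha;d_{\cdot j'})$, the inequality $\ge$ is exactly condition (i) in each subcase, after taking reciprocals in the Active one. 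The (ii) cases are precisely those where it fails, which gives the stated exception.

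\textbf{Main obstacle.} The hardest step is justifying the closed forms. Specifically, I must show that at the DDRF/DRF optimum the binding constraint is the argmin over $\mathcal M_0$, including the artificial resource $0$. I must also show that in DDRF the pairwise equalization among active users, together with the maximal objective, forces a common level $\lambda$ with $x_i=\alpha_i^{\mathcal C}\lambda$. For this I will use that the objective is increasing in $\lambda$ and that all constraints are monotone in $\lambda$. A secondary issue is that $\mu_i^{\mathcal C}$ uses $\max$ over $\mathcal C$ while $b_i^{\mathcal C}$ may differ from the active bottleneck; I will assume, as the paper's scalar formulation does, that activity is read at $b_i^{\mathcal C}$.
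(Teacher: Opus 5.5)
\textbf{Where the proposal fails: the case $\mathcal{B}\mathcal{N}\mathcal{C}=\emptyset$, Weak.} Your closed forms and the comparison that produces (i)/(ii) are exactly the argument of Appendix~\ref{AppendixD}. That covers the DRF level via the auxiliary resource $0$, the value $c_{j^\star}^{\mathcal C}/M_1(\alpha^{\mathcal C};d_{\cdot j^\star})$ in the Active subcase, and $|\mathcal W|+\tilde c_{\tilde j^\star}/M_1(\alpha^{\mathcal C}_{\mathcal A};d_{\mathcal A,\cdot\tilde j^\star})$ in the Weak subcase. Your $\mathcal{B}\mathcal{N}\mathcal{C}=\emptyset$, Active case is also correct, since the two programs coincide there. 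The Weak case with $\mathcal{B}\mathcal{N}\mathcal{C}=\emptyset$ is different. Once some weak tenant has $x_i^{\mathrm{DRF}}<1$, the DRF point is not DDRF-feasible. Raising weak tenants to $1$ while scaling the active ones down then gives no control on the sign of the net change. Suppose a resource $j_0$ binds in both programs; here $\alpha^{\mathcal C}=\alpha$. With $t$ the DRF level, the same computation gives
\[
\sum_{i\in\mathcal N}x_i^{\mathrm{DDRF}}-\sum_{i\in\mathcal N}x_i^{\mathrm{DRF}}
=\sum_{i\in\mathcal W}\bigl(1-t\alpha_i\bigr)\Bigl(1-\frac{d_{ij_0}}{M_1(\alpha_{\mathcal A};d_{\mathcal A,\cdot j_0})}\Bigr).
\]
This is negative as soon as a weak tenant's demand on $j_0$ exceeds the $\alpha$-weighted mean demand of the active tenants on $j_0$. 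Nothing excludes that. Active tenants may be active on a \emph{different} congested resource, demand little of $j_0$, and still carry large weights $1/\mu_i$. Your fallback to the explicit formula only reproduces criterion (i), which you would then have to show always holds.

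\textbf{The unconditional claim is false in this case.} Take $C=(12,12)$ and demands $(6,4)$, $(7,0)$, $(0,4.5)$, $(0,4.5)$.
\begin{itemize}
\item \emph{Classification.} Both resources are congested, and Algorithm~\ref{alg:thresholds} gives $\lambda=(6,4)$. Tenant~1 is weak with congested bottleneck resource~1. Tenant~2 is active on resource~1, and tenants~3 and~4 are active on resource~2. So $\mathcal{B}\mathcal{N}\mathcal{C}=\emptyset$ and this is the Weak case.
\item \emph{DRF.} With $\mu=(1/2,7/12,3/8,3/8)$, DRF reaches $t=3/8$, where resource~2 and the cap bind together. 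This gives $x^{\mathrm{DRF}}=(3/4,9/14,1,1)$, with sum $95/28\approx 3.393$.
\item \emph{DDRF.} DDRF fixes $x_1=1$. Resource~2 then requires $4+24t'\le 12$, so $t'=1/3$. This gives $x^{\mathrm{DDRF}}=(1,4/7,8/9,8/9)$, with sum $211/63\approx 3.349$.
\end{itemize}
Lowering tenant~1's demands slightly makes her strictly weak without changing the conclusion. So no argument can close this gap.

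The paper does not close it either. Its Case~2 computation never uses $\mathcal{B}\mathcal{N}\mathcal{C}\neq\emptyset$, so it applies verbatim here and yields only the conditional criterion. What is provable is narrower. In every Weak instance, whether or not $\mathcal{B}\mathcal{N}\mathcal{C}$ is empty, DDRF dominates DRF exactly under condition (i). The $\mathcal{B}\mathcal{N}\mathcal{C}=\emptyset$ Weak cell of the table therefore needs the same (i)/(ii) split.
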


\textbf{Example.} Consider $2$ users and $2$ resources, with {\footnotesize$D=\begin{bmatrix}4 & 8\\ 7 & 1\end{bmatrix}$} and $C^\top=[10,10]$, resource~1 is congested while resource~2 is not. 
In this instance, $\mu=(0.8,0.7)$ and $\mu^{\mathcal C}=(0.4,0.7)$, hence $\alpha=(1.25,1.4286)$ and $\alpha^{\mathcal C}=(2.5,1.4286)$.
With $\mathcal M_0={0,1,2}$, the minimizers are $j^\star=1$ and $j'=1$, and we obtain $\frac{M_1(\alpha^{\mathcal C};d_{\cdot 1})}{c_1}\approx 0.6585 < \frac{M_1(\alpha;d_{\cdot 1})}{c_1}\approx 0.7222$.
Thus condition (active, $\mathcal B \mathcal N \mathcal C \neq \emptyset$) (i) holds, so $\sum_i x_i^{\mathrm{DDRF}}>\sum_i x_i^{\mathrm{DRF}}$ (DDRF is more efficient here). The weighted congested pressure at DDRF's bottleneck resource (with weights $\frac{1}{\mu_i^C}$) is not larger than the pressure of DRF (with weights $\frac{1}{\mu_i}$). The proof of Theorem~\ref{theorem2} is deferred to appendix~\ref{AppendixD}. 

\subsection{Numerical Example}
\label{sec:anexample}

We consider $3$ tenants with $3$ resources that are the number of physical resource blocks $N_{\text{PRB}}$, the CPU clock frequency $f$ and the fronthaul bandwidth $B^{\text{FH}}$. Demands are as follows: 

     slice 1: $(60, 2.1, 1209.6)$, 
     slice 2: $(45, 2.22, 453.6)$ and
     slice 3: $(30, 1.25, 151.2)$
     
where each tuple corresponds to $(N_{\text{PRB}}, f_{\text{[GHz]}}, B_{\text{[Mbps]}}^{\text{FH}})$. The aggregate demand is $(135, 5.371, 1814.4)$ exceeds the available capacities $(106, 3.5, 1000)$, so all resources are congested. We have a linear proportional dependency between $N_{\text{PRB}}$ and $B_{\text{[Mbps]}}^{\text{FH}}$ $(x_{i1} = x_{i3})$ for each user $i \in \{1, 2, 3\}$ and a quadratic dependency between  $N_{\text{PRB}}$ and $f_{\text{[GHz]}}$ $(\alpha_i x_{i1} \leq x_{i2}^2, \; \alpha = (0.9992, 0.9921, 0.9733))$ for each user as well. This choice is inspired from realistic dependencies reported in O-RAN literature~\cite{khatibi2018modelling}, \cite{hojeij2025flexible}.
The dependency-aware multi resource allocation problem becomes $(D, C, \mathcal{F})$ with  $D = ${\footnotesize$\begin{bmatrix} 60 & 2.054 & 1209.6 \\ 45 & 2.22 & 453.6 \\ 30 & 1.097 & 151.2
\end{bmatrix}$}
where the columns correspond to the resources and the rows to the tenants. The available resources are given by $C^\top =${ \footnotesize $ \begin{bmatrix} 106 & 3.5 & 1000 \end{bmatrix}$}. 

To obtain \textsc{DDRF} solution we compute the active demands matrix $y =$ {\footnotesize $\begin{bmatrix} 1 & 1 & 1 \\ 1 & 1 & 1 \\ 0 & 0  & 0 \end{bmatrix}$} and the share matrix  $s =  $ {\footnotesize$ \begin{bmatrix} 0.566 & 0.5869 & 1.2096 \\ 0.4245 & 0.6343 & 0.4536 \\ 0.283 & 0.3134 & 0.1512 \end{bmatrix}$}. 
Clearly, user 1 has bottleneck on resource 3 with $\mu_1 = 1.2096$, user 2 has bottleneck on resource 2 with $\mu_2 = 0.6343$ and user 3 is labelled as weak.
DDRF allocation is obtained by solving the following program:
\begin{IEEEeqnarray}{rCl}
&&\max_{X}  \displaystyle \sum_{i=1}^{3}\sum_{j=1}^{3} x_{ij} \hfill \text{(Objective)} \nonumber\\[2pt]
&&\text{subject to} \nonumber\\[2pt]
&& \left({\footnotesize\begin{bmatrix} 60 & 2.1 & 1209.6 \\ 45 & 2.22 & 453.6 \\ 30 & 1.25 & 151.2
\end{bmatrix}} \odot X\right)^\top {\footnotesize\begin{bmatrix} 1 \\ 1 \\ 1 \end{bmatrix}} \leq {\footnotesize\begin{bmatrix} 106 \\ 3.5 \\ 1000 \end{bmatrix}} \hfill \text{(Capacity)} \nonumber
\\
&& 0 \;\le\; x_{ij} \;\le\; 1, \quad i,j\in\{1,2,3\} \;\;\; \hfill \text{(Bounds)} \nonumber\\[2pt]
&& 1.2096\,x_{13} = 0.6343\,x_{22}  \;\;\; \hfill \text{(Fairness)} \nonumber\\ [2pt]
&& x_{i3} \;=\; x_{i1},  \; i \in \{1, 2, 3\} \;\; \hfill \text{(Fonthaul Dependency)} \nonumber\\
&& 0.9992\,x_{11} - x_{12}^2 \le 0 \hfill \text{(Latency Dependency)} \nonumber\\
&& 0.9921\,x_{21} - x_{22}^2 \le 0 \hfill \text{(Latency Dependency)} \nonumber\\
&& 0.9733\,x_{31} - x_{32}^2 \le 0 \hfill \text{(Latency Dependency)} \nonumber
\end{IEEEeqnarray}
Where $\odot$ is the Hadamard element-wise product, the program is nonconvex because the constraints in (Latency Dependency) are the hypograph of a convex function. 

To handle the resulting DC (difference-of-convex) constraint, we apply the Convex--Concave Procedure (CCP)~\cite{shen2016disciplined}. CCP linearizes the term $x_{i2}^2$ around a tangent point $a_i$ using its first-order Taylor expansion, namely $x_{i2}^2 \ge a_i^2 + 2a_i(x_{i2}-a_i)=2a_i x_{i2}-a_i^2$. Replacing $x_{i2}^2$ by this affine under-estimator yields the convex surrogate constraint $\alpha_i x_{i1}-(2a_i x_{i2}-a_i^2)\le 0$. This step is conservative: because $2a_i x_{i2}-a_i^2 \le x_{i2}^2$ for all $x_{i2}$, any point satisfying $\alpha_i x_{i1}\le 2a_i x_{i2}-a_i^2$ also satisfies the original constraint $\alpha_i x_{i1}\le x_{i2}^2$, hence feasibility is preserved at every iteration. Starting from an initial feasible point $a_i^{(0)}$, CCP solves the resulting convex program at iteration~$k$ and updates the linearization point as $a_i^{(k+1)}\gets x_{i2}^{(k)}$, repeating until convergence to a stationary (locally optimal) solution. \vspace{-0.2cm}

\subsection{Allocation and Waste Comparison Against Dependency-Agnostic Baselines}

\begin{table}[t]
  \centering
  \caption{Comparison of allocation outcomes under different fairness rules. Waste is measured as the total dependency-violation normalized by total system capacity. Idle is the unallocated capacity normalized by total system capacity}
  \renewcommand{\arraystretch}{1.0}
  {\setlength{\extrarowheight}{6 pt}
  \begin{tabular}{l|c|c|c}
  \hline
  \textbf{Algorithm} &
  \begin{tabular}[c]{@{}c@{}}
    \textbf{Resulting Allocation}\\
    $(N_{\text{PRB}},\ f_{[\text{GHz}]},\ B^{\text{FH-7.2x}}_{[\text{Mbps}]})$
  \end{tabular}
  & \textbf{Waste (\%)} & \textbf{Idle (\%)} \\
  \hline
  DRF & ${\scriptsize\begin{bmatrix}
  15.55 & 0.53 & 313.43\\
  22.24 & 1.10 & 224.14\\
  30.00 & 1.10 & 151.20
\end{bmatrix}}$ & $33.12\%$ & $31.58\%$\\
  MMF & ${\scriptsize\begin{bmatrix}
  38.00 & 1.20 & 424.40\\
  38.00 & 1.20 & 424.40\\
  30.00 & 1.10 & 151.20
\end{bmatrix}}$ & $30.91\%$ & $0\%$\\
  PF & ${\scriptsize\begin{bmatrix}
  33.07 & 1.13 & 666.67\\
  24.80 & 1.22 & 250.00\\
  16.53 & 0.60 & 83.33
\end{bmatrix}}$ & $43.21\%$ & $2.90\%$\\
  \shortstack{Utilitarian\\\tiny\emph{dependency-agnostic}} & ${\scriptsize\begin{bmatrix}
  27.18 & 0.93 & 547.91\\
  29.85 & 1.47 & 300.89\\
  30.00 & 1.10 & 151.20
\end{bmatrix}}$ & $38.21\%$ & $1.71\%$\\
  DDRF & ${\scriptsize\begin{bmatrix}
  18.08 & 1.13 & 364.53\\
  14.98 & 1.28 & 151.02\\
  30.00 & 1.10 & 151.20
\end{bmatrix}}$ & $\mathbf{0\%}$ & $33.92\%$\\
D-Util & ${\scriptsize\begin{bmatrix}
  39.62 & 1.67 & 798.74\\
  4.97 & 0.73 & 50.06\\
  30.00 & 1.10 & 151.20
\end{bmatrix}}$  & $\mathbf{0\%}$ & $2.83\%$\\
  \hline
  \end{tabular}
  }
  \label{tab:wastetable}
\end{table}

We report allocation results (approximated to two decimal digits) together with the percentage of waste and idle resource for the numerical example in Section~\ref{sec:anexample} under several baselines that either ignore inter-resource dependencies or enforce a linear dependency model: (i) DRF, (ii) per-resource MMF (applied independently on each resource), (iii) a generalized PF baseline derived from the MURANES framework~\cite{MURANESPaper} with a scalar satisfaction variable $x_i$ and weights $(1,0,\ldots,0)$, and (iv) a dependency-agnostic utilitarian baseline enforcing a linear proportional dependency across all resources with dependency-aware~\ref{DDRF} and dependency-aware utilitarian~\ref{utilitarian}. As shown in Table~\ref{tab:wastetable}, all baselines incur allocation waste under the nonlinear processing dependency, whereas DDRF and D-Util achieves zero waste. In particular, DDRF satisfies the dependency constraints and fully utilizes at least one congested resource (here, the computing budget). By contrast, DRF saturates none of the resources and can violate the processing-time constraint, yielding PRB allocations that are not effectively processable.

\section{Simulation and evaluation approach}
\label{sec:simulation}
To assess the performance of \textsc{DDRF}, we design an inter-slice scenario. This involves selecting an appropriate demand set, identifying the resources to be allocated across slices, defining representative congestion scenarios, and specifying how resource coupling constraints are enforced.
\vspace*{-0.5 cm}

\subsection{Tenants demands generation}

Demands were derived from Amazon EC2 instance traces~\cite{vantageEC2}, which capture heterogeneous workloads across diverse applications and thus reflect realistic cloud infrastructure patterns relevant to 5G slicing. This diversity makes them a strong benchmark for evaluating allocation mechanisms under varying demand conditions. We selected $23$ demand profiles from different instance families (e.g., general purpose, compute-optimized, memory-optimized), each representing a slice, among these slices, $3$ slices were chosen to be weak (i.e., their demands are low as defined in~\ref{SystemModel}).

Each demand spans four resources: memory (GiB), vCPU, network bandwidth (Gbps), and radio blocks (RBs). The first three are provided directly by the EC2 dataset, while RB demands were synthetically generated. For regular users, RB demands were drawn uniformly from $\mathcal{U}[15,25]$, and for weak users from $\mathcal{U}[1,4]$. All demands were normalized to match the predefined resource capacities: $[17128, 1364, 566.25, 273]$ for (memory, vCPU, bandwidth, RBs), respectively. he choice of $273$ RBs corresponds to a 100\,MHz NR carrier at 30\,kHz SCS, which allows scheduling up to $273$ RBs in the frequency domain per slot~\cite{3gpp_ts_38_104}. In our setting, resources are allocated at the slice level as static, long-term guarantees rather than dynamic per-slot scheduling.
 
\vspace{-0.3 cm}

\subsection{Congestion profiles}

To evaluate the performance of \textsc{DDRF}, we consider varying \textit{congestion profiles}, defined as tuples in $[0,1]^4$ where each coordinate specifies the fraction of the corresponding resource capacity available. Formally, with demand matrix $D$ (columns corresponds to memory, vCPU, bandwidth and RBs respectively) and a congestion profile CP, the available capacity of resource $j$ will be: $c_j = (\sum_{i = 1}^N d_{ij}) \cdot \text{CP}_j $. For example, the congestion profile $(0.3, 0.9, 0.9, 0.9)$ means that only $30\%$ of the total memory demands are available, while $90\%$ of the demands for vCPU, bandwidth, and RBs remain respectively. We impose $14$ congestion profiles to capture both symmetric (e.g., $[0.3, 0.3, 0.3, 0.3]$, $[0.9, 0.9, 0.9, 0.9]$) and asymmetric resource limitations (e.g. $[0.3, 0.8, 0.8, 0.8]$, $[0.8, 0.3, 0.3, 0.8]$).

This design enables stress-testing under both balanced congestion scenarios and highly asymmetric conditions where a single resource becomes the bottleneck.
\vspace{-0.4 cm}

\subsection{Dependency scenarios}

We introduce three types of coupling between resources for each user. In the \emph{linear proportional} case, all resources must be allocated in the same ratio, i.e., $x_{ij} = x_{ik}, \;\; \forall i \in \mathcal{N}, \; j \neq k \in \mathcal{M}$. In the \emph{affine linear} case, dependencies are expressed as constraints of the form $aA_{\text{mem}} + bA_{\text{cpu}} + cA_{\text{bandwidth}} + dA_{rb} + e = 0$, where $A_r$ denotes the allocated amount of resource $r$. In the \emph{polynomial quadratic} case, constraints take the form $aA_{\text{mem}}^\alpha + bA_{\text{cpu}}^\beta + cA_{\text{bandwidth}}^\gamma + dA_{rb}^\eta + e = 0$, with exponents $\{\alpha, \beta, \gamma, \eta\} \in \{0,1,2\}$ and at least one exponent equal to $2$, thus capturing quadratic couplings. 

Based on these definitions, we design three experiments: (i) all couplings follow the linear proportional model, (ii) all couplings follow the affine linear model, and (iii) all couplings follow polynomial quadratic model with $\gamma = 2$ and $\alpha = \beta = \eta = 1$. In all experiments, we consider that each user’s dependency family is $\mathcal{S}_i = \{\mathcal{M}\}, \; \forall \; i \in \mathcal{N}$, thus, all resources belong to a single dependency set and are mutually dependent.
The coefficients $(a, b, c, d, e)$  are derived to align with the demand matrix such that allocating the full demand of each user (i.e., $f_i^{(k)}(\{1\}_{j\in S_i^{(k)}})=0$ (or $\le 0$)) satisfies the coupling constraints. This construction is consistent with our assumption on $\mathcal{F}$.
 All optimization problems were solved using the \texttt{CVXPY}~\cite{diamond2016cvxpy} library with the DCCP extension~\cite{shen2016dccp}.
  \vspace{-0.4 cm}
 \subsection{Baselines}
 
We benchmark \textsc{DDRF} against the following baselines: 
DRF obtained from~\cite{MURANESPaper}, with resource aggregation $s_ix_i, \; X \in \mathbb{R}^N$ and $w = (1, 0 , \cdots, 0)$;
PF obtained from~\cite{MURANESPaper}, which generalizes weighted proportional fairness (PF) to multi-resource settings, where resource aggregation is $x_i, X \in \mathbb{R}^N$ and $w = (1, 0, \cdots, 0)$; 
Mood~\cite{MURANESPaper}, which allocates resources under complete information sharing, here resource aggregation is $PS_ix_i, \; X \in \mathbb{R}^N$ and $PS_i$ is the player satisfaction of user $i$ defined in~\cite{moodvalue} computed on his bottleneck resource with $w = (1, 0, \cdots, 0)$; 
MMF, applied independently on each resource;
and the \emph{dependency-aware utilitarian} solution defined in \eqref{utilitarian}.
 \vspace{-0.3 cm}

\subsection{Effective Satisfaction Formulation}

We formalize the notion of \emph{effective satisfaction}, which quantifies the portion of an allocation that remains usable once tenant dependencies are enforced. This captures the distinction between raw allocations and the subset that yields meaningful utility under inter-resource couplings.

\begin{definition}[Effective Satisfaction Region]
Given a satisfaction matrix $x \in [0,1]^{N \times M}$ and global dependency constraints $\mathcal{F}$, the \emph{effective satisfaction region} is:
\[
\mathcal{E}(X, \mathcal{F}) = \left\{ e \in [0,1]^{N \times M} \;\middle|\; 0 \leq e_{ij} \leq X_{ij}, \; \forall i,j, \;\; e \in \mathcal{F} \right\}
\]
\end{definition}
This set consists of all feasible satisfaction matrices $e$ that (i) do not exceed the raw allocation $x$, and (ii) respect every tenant’s declared dependencies.

\begin{definition}[Effective Satisfaction]
The \emph{effective satisfaction} matrix is defined as:
\[
X^{\text{effective}} = \arg\max_{e \in \mathcal{E}(X, \mathcal{F})} \;\; \sum_{i \in \mathcal{N}} \sum_{j \in \mathcal{M}} e_{ij}.
\]
\end{definition}
Only the dependency-respecting portion of the allocation contributes to utility, ensuring that efficiency is measured in terms of usable satisfaction.

\emph{\textbf{Example.} (Linear Dependency)}
Suppose each tenant has a linear dependency across two resources: $x_{11} = x_{12}$, $x_{21} = x_{22}$. For an allocation
$X =$ {\footnotesize $ \begin{bmatrix} 0.3 & 0.5 \\ 0.2 & 0.7 \end{bmatrix}$}, effective satisfaction enforces equality across each tenant’s resources, yielding
$X^{\text{effective}} = $ {\footnotesize $\begin{bmatrix} 0.3 & 0.3 \\ 0.2 & 0.2 \end{bmatrix}$}.

\emph{\textbf{Example.} (Nonlinear Dependency)}
Consider quadratic couplings: $(a_{11})^2 = a_{12}$ and $(a_{22})^2 = a_{21}$. Given 
$X = $ {\footnotesize $\begin{bmatrix} 0.5 & 0.5 \\ 0.6 & 0.6 \end{bmatrix}$}, projecting onto the feasible region defined by these constraints yields
$X^{\text{effective}} = $ {\footnotesize $\begin{bmatrix} 0.5 & 0.25 \\ 0.36 & 0.6 \end{bmatrix}$}.
 \vspace{-0.5 cm}
\subsection{Performance metrics}
We assess allocations using three complementary metrics:  

\subsubsection{Capacity partitioning}
In multi-resource environments with dependencies, not all allocated resources contribute to utility. We therefore partition system capacity into:
\begin{itemize}
    \item \textit{Used capacity}: the effectively utilized portion,
    $\sum_{i \in \mathcal{N}} \sum_{j \in \mathcal{M}} X^{\text{effective}}_{ij} d_{ij}$.
    \item \textit{Wasted capacity}: resources allocated but unusable due to violated dependencies,
    $\sum_{i \in \mathcal{N}} \sum_{j \in \mathcal{M}} (X_{ij} - X^{\text{effective}}_{ij}) d_{ij}$. 
    \item \textit{Idle capacity}: resources not allocated and remain with InP,
    $\sum_{j \in \mathcal{M}} \left(c_j - \sum_{i \in \mathcal{N}} X_{ij} d_{ij}\right)$.
\end{itemize}

Only the used capacity reflects realizable tenant satisfaction, while wasted and idle capture different forms of inefficiency. we report the fraction of capacity that is \emph{used}, \emph{wasted} (violating dependencies), or \emph{idle} to draw conclusions on the fraction of allocated capacity that was usable by each baseline and how fairness affects efficiency.  
\subsubsection{Cumulative density function (CDF) of the effective satisfaction}
We plot the distribution of realized satisfactions across users, reflecting how heterogeneous demands are met. the smoother and gradually increasing the curve the balanced the allocation was without any jumps (some users given low satisfaction while others are high).

\subsubsection{Fairness}
Fairness is evaluated using \emph{Jain's index}~\cite{jain1984quantitative}, a standard metric in single-resource settings.  
For any vector $z = (z_1,\dots,z_N)$ (either allocations or satisfactions $\{a_{ij}\}_{i = 1}^N$, $\{x_{ij}\}_{i = 1}^N \; \forall j \in \mathcal{M}$), it is defined as
$
J(z) = \frac{\left(\sum_{i=1}^N z_i\right)^2}{N \sum_{i=1}^N z_i^2}.
$
We compute $J$ per resource on the allocation and report the average across all resources between \textsc{DDRF} and dependency-aware utilitarian baseline.


\section{Numerical Results}
\label{sec:results}

We report numerical results for each evaluation metric, detailing the methodology, obtained figures, and  interpretation\footnote{DDRF and simulations'  implementation are available on: \url{https://gitlab.roc.cnam.fr/braikz/dependency-aware-drf.git}.}.

\begin{figure*}
    \centering
    \includegraphics[width=0.8\linewidth]{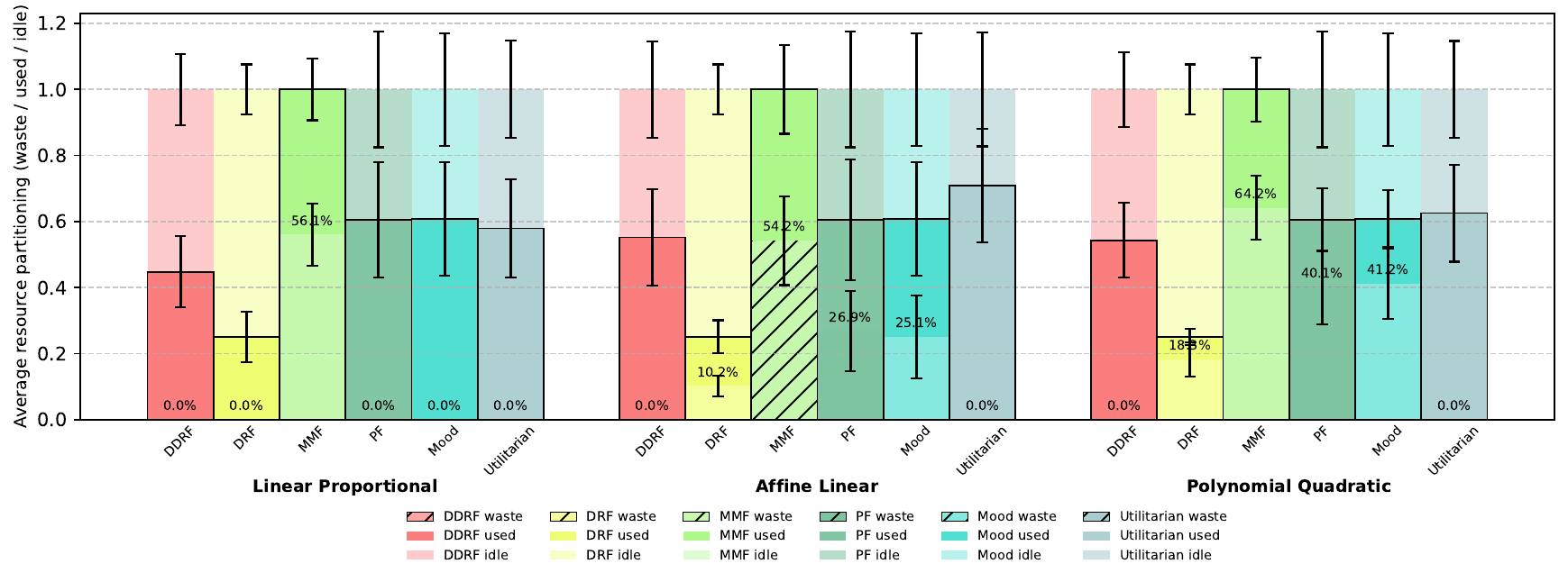}
    \caption{Average partitioning of total resource capacity into wasted, useful, and idle components under different dependency structures across congestion profiles. \vspace*{-0.5cm}}
    \label{fig:capacitydistribution}
\end{figure*}
\vspace{-0.3 cm}

\subsection{Dependency-Aware vs.\ Dependency-Agnostic}
\subsubsection{Resource Partitioning}

Figure~\ref{fig:capacitydistribution} reports the average partitioning of total capacity across congestion profiles. Each bar is divided into three components: \emph{wasted} (hatched, with the percentage labeled), \emph{used} (solid, outlined together with waste), and \emph{idle} (light shade). For each congestion profile, we run all algorithms to obtain the satisfaction matrix, derive the corresponding effective matrix, and then compute the shares of wasted, used, and idle capacity. Results are averaged across profiles to smooth out demand-specific variations.  

Several trends emerge. First, as expected, both \textsc{DDRF} and the Utilitarian approach achieve \emph{zero waste} across all settings (decreasing allocation waste by roughly $60\%$), since both explicitly enforce resource dependencies. In contrast, all dependency-agnostic baselines (DRF, PF, Mood, and MMF) incur non-trivial waste because they ignore coupling. MMF in particular always saturates resources but yields the \emph{highest waste}, as it allocates without regard to dependencies. Second, \textsc{DDRF} consistently improves efficiency compared to DRF: the presence of weak users leads DRF to stall before saturating resources, resulting in idle capacity, whereas DDRF reallocates surplus to other tenants once weak users are fully satisfied (effective Satisfaction rate enhanced from $6.2 \%$ to $60\%$ in the polynomial quadratic case). Interestingly, PF and Mood align
to identical behavior in these cases, confirming their sensitivity to weak tenants. Third, the Utilitarian approach achieves the highest overall efficiency, often allocating nearly all available capacity whenever dependencies permit; however, this comes at the cost of fairness: many tenants are fully satisfied while others are left starved. DDRF, by contrast, balances efficiency and fairness, avoiding starvation while still adapting to the coupling structure. Finally, PF and Mood perform comparatively better under homogeneous congestion (where the same congestion level is applied to all resources), even outperforming all other baselines in the linear proportional case (confirming why they have achieved larger average effective satisfaction rate than utilitarian in the linear proportional dependency case, and why they have larger confidence interval).


\begin{figure*}[t]
    \centering
    \includegraphics[width=0.92\linewidth]{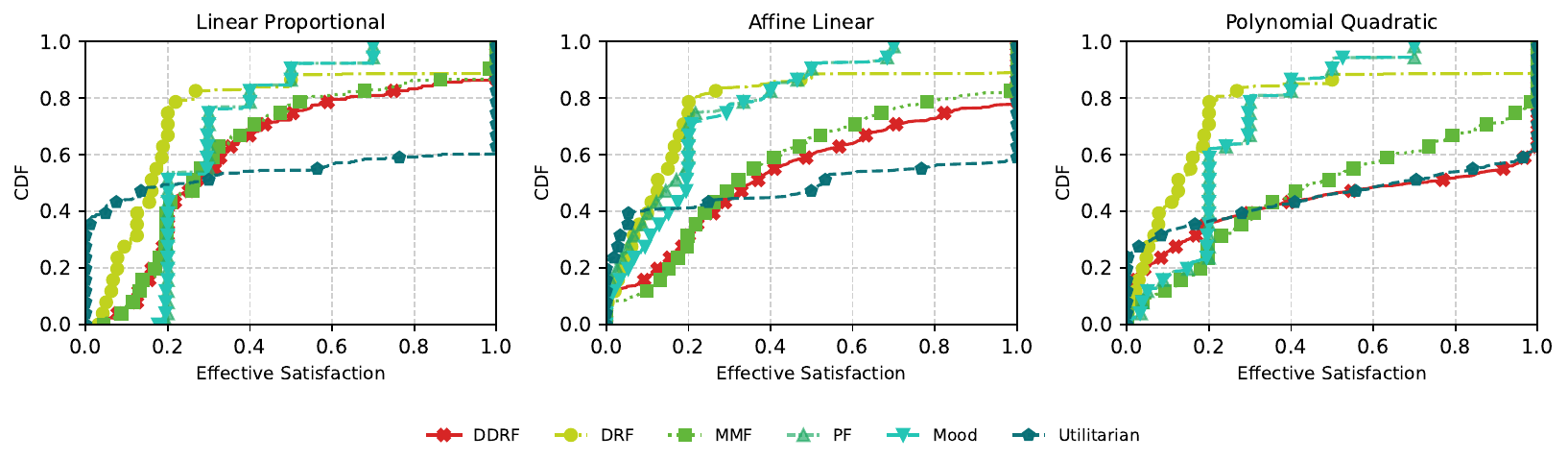}
    \caption{CDF of effective overall satisfaction rate across users and resources, aggregated over all congestion profiles. \vspace*{-0.5cm}}
    \label{fig:overallsatisfaction}
\end{figure*}

\begin{figure*}[t]
    \centering
    \includegraphics[width=0.85\linewidth]{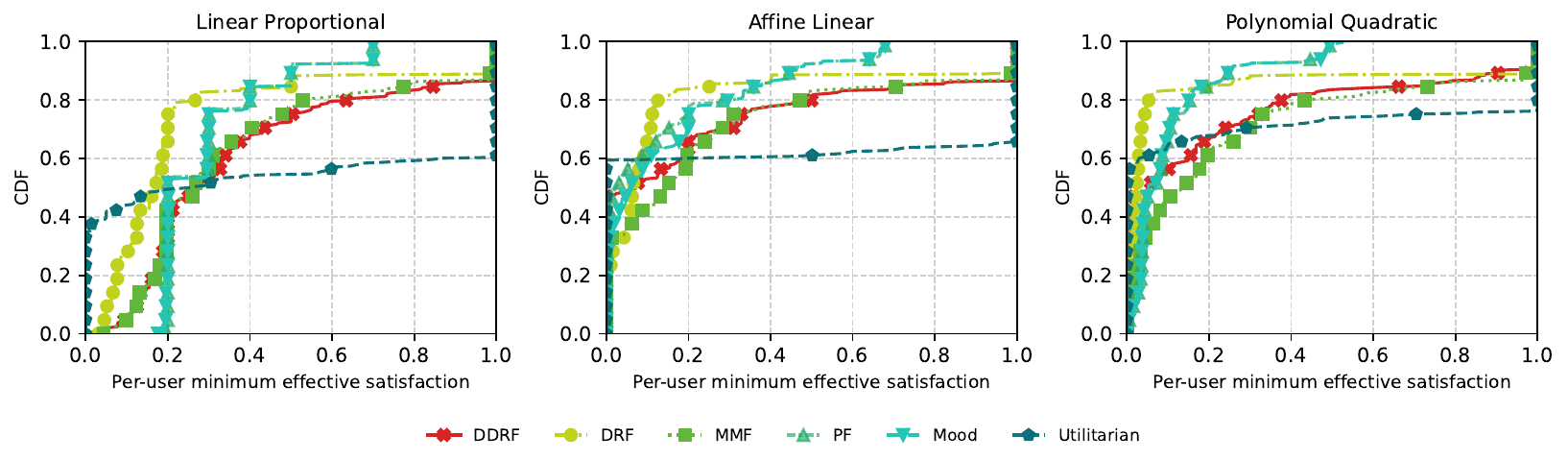}
    \caption{CDF of users’ minimum effective satisfaction rate across resources, aggregated over all congestion profiles. \vspace*{-0.4cm}
}
    \label{fig:minsatisfaction}
\end{figure*}

\subsubsection{Effective Satisfaction Distribution}
Figure~\ref{fig:overallsatisfaction} presents the cumulative distribution function (CDF) of effective satisfaction rates aggregated over all congestion profiles, showing the proportion of user-resource allocations that achieve less than or equal to a given satisfaction level. Complementarily, Figure~\ref{fig:minsatisfaction} reports the CDF of each user’s \emph{minimum} effective satisfaction across resources, aggregated over all profiles, thereby capturing the worst-case satisfaction experienced by each tenant. 

Three key insights. First, \textsc{DDRF} consistently achieves at least the level of satisfaction guaranteed by MMF: in the linear proportional case DDRF and MMF nearly coincide, while in scenarios with generic dependencies DDRF strictly outperforms MMF while enforcing zero waste. Importantly, both DDRF and MMF yield smooth, gradual CDF curves (in overall satisfaction and per-user minimum satisfaction), reflecting balanced satisfaction across users, having this close performance confirms that DDRF is as good as MMF effective satisfaction. Second, the behavior of the other baselines diverges significantly; the Utilitarian approach produces a steep jump in its CDF: many tenants are fully satisfied, but a large fraction receive zero satisfaction, Figure~\ref{fig:minsatisfaction} emphasize this aspect. PF and Mood collapse to identical performance, characterized by vertical growth in their CDFs due to strict satisfaction equalization. Finally, DRF is clustered with PF and Mood in the top-left region of the CDF plots, confirming that most users receive low satisfaction when dependencies are ignored. Taken together, these results show that DDRF combines the fairness balance of MMF with the ability to handle generic dependencies, outperforming all baselines in worst-case user satisfaction without introducing waste.


\begin{figure}[t]
    \centering
    \includegraphics[width=0.8\linewidth]{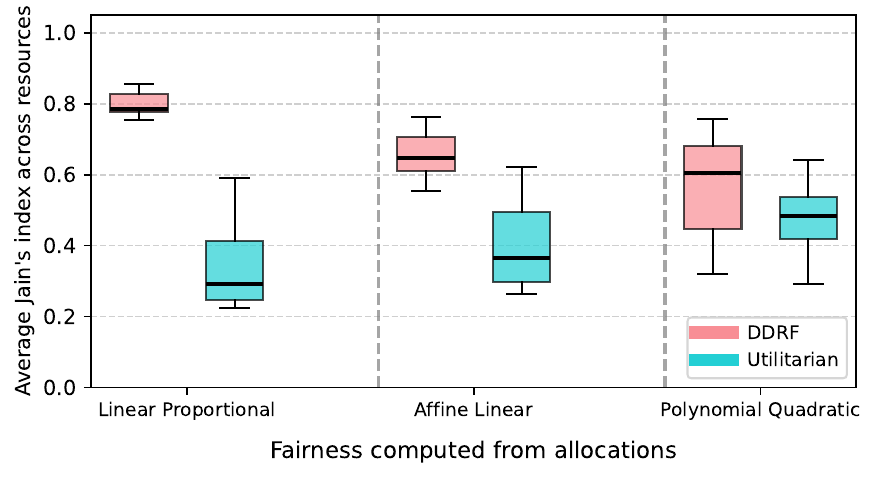}
    \caption{Boxplots of Jain’s fairness index across congestion profiles for DDRF and Utilitarian. We report fairness computed on allocation.}
    \label{fig:fairnessboxplot}
\end{figure}
\vspace{-0.4 cm}

\subsection{Fairness of Dependency-Aware Allocation}

To assess fairness, we quantify it using Jain’s index and compare \textsc{DDRF} against the Utilitarian baseline. Figure~\ref{fig:fairnessboxplot} shows boxplots of the average Jain’s index over resources across all congestion profiles. Fairness is evaluated based on allocations. DDRF, designed as an extension of MMF to multi-resource settings, explicitly enforces fairness at the allocation level. In almost all cases, DDRF achieves higher median fairness than Utilitarian under the three dependency models, it increases fairness by more than $15\%$.
Evidently, \textsc{DDRF} is well suited to allocation-equity-based fairness formulations that account for inter-resource dependencies
\vspace{-0.4 cm}

\subsection{Experimental Evaluation on a vRAN Use Case with Real Resource Dependencies}
The previous results were obtained based on synthetic dependencies, however it is desirable to assess the performance on a realistic use case with real resource dependencies.  

\begin{figure}[t]
    \centering
    \includegraphics[width=0.7\columnwidth]{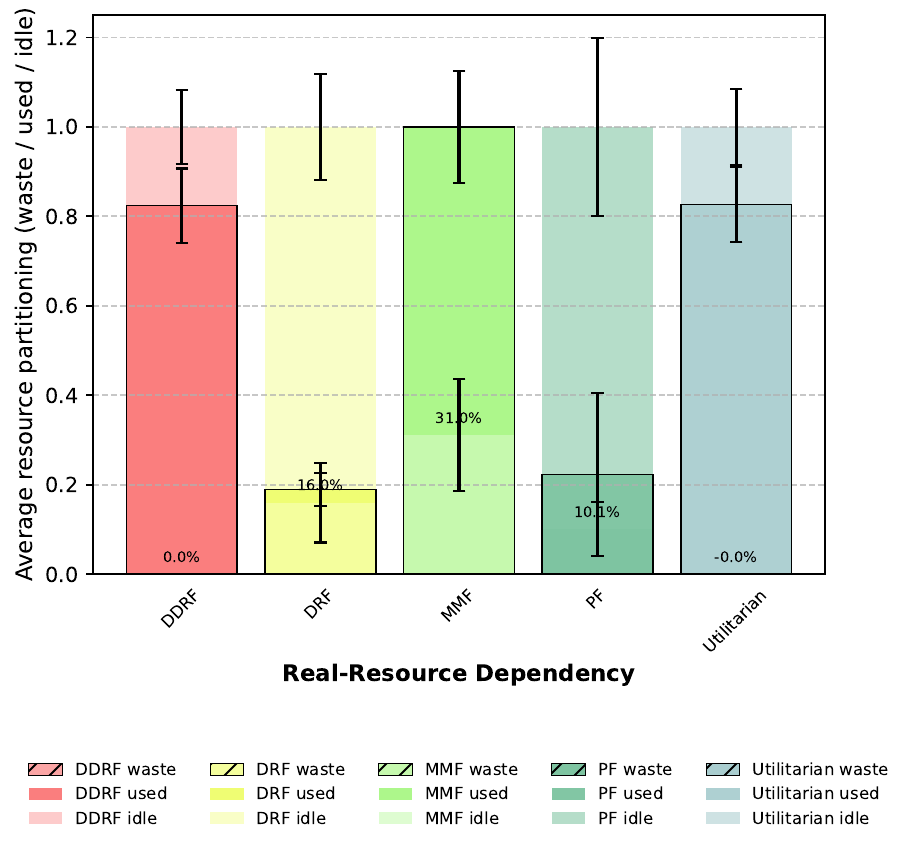}
    \caption{Average partitioning of total resource capacity into wasted, useful, and idle components under different dependency structures across congestion profiles for realistic dependencies. 
    }
    \label{fig:capacitydistributionreal}
\end{figure}

\begin{figure}[t]
    \centering
    \subfloat[]{%
        \includegraphics[width=0.498\columnwidth]{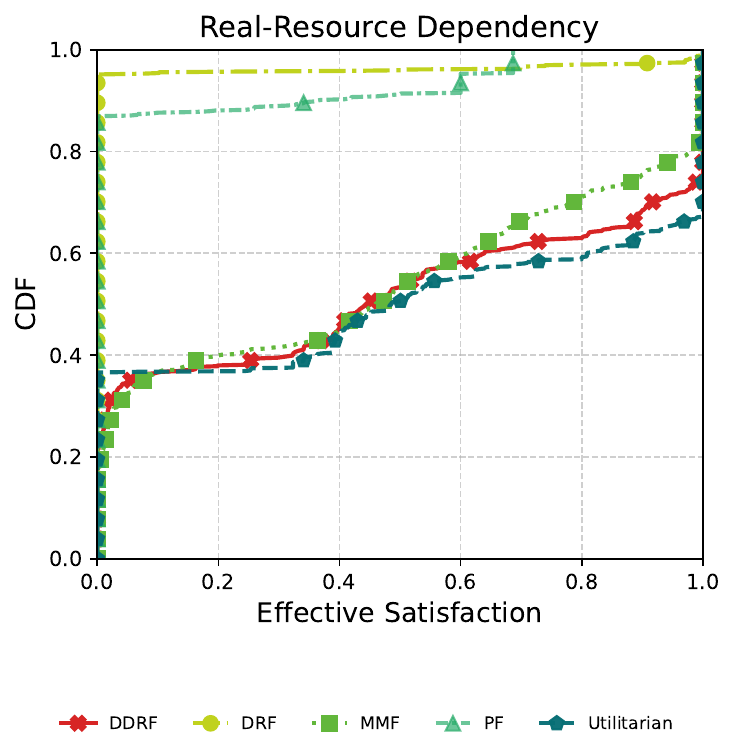}%
        \label{fig:overallsatisfactionreal}%
    }\hfill
    \subfloat[]{%
        \includegraphics[width=0.498\columnwidth]{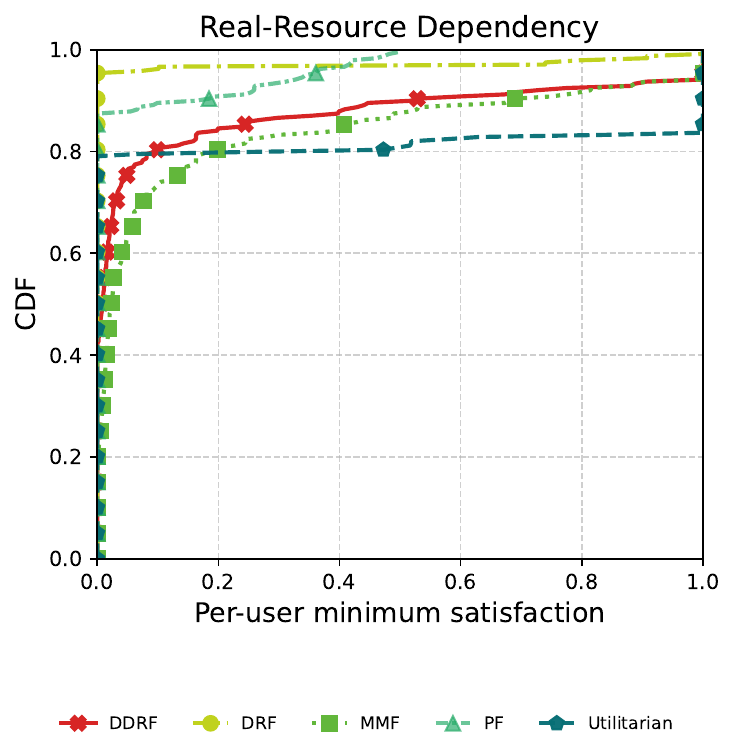}%
        \label{fig:minsatisfactionreal}%
    }

    \caption{CDF of effective (a) overall and (b) minimum satisfaction rate across users and resources, aggregated over all profiles with realistic dependencies.}
    \label{fig:cdf_satisfaction_realistic_deps}
  
\end{figure}

\begin{figure}[t]
    \centering
    \includegraphics[width=0.6\linewidth]{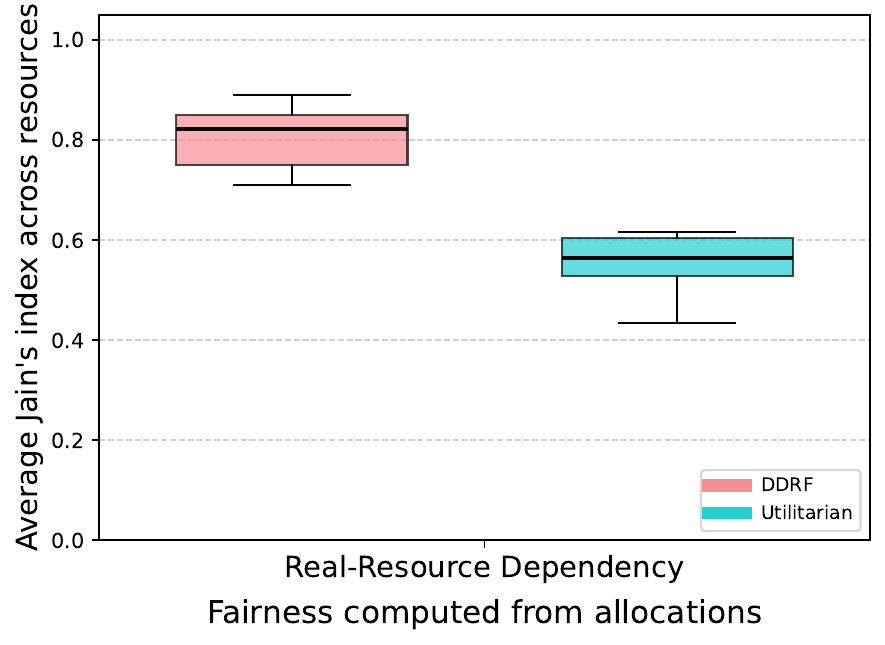}
    \caption{Boxplots of Jain’s fairness index across congestion profiles for DDRF and Utilitarian for realistic dependencies. We report fairness on allocation.}
    \label{fig:fairnessboxplotreal}
\end{figure}

We build a realistic vRAN dependency model by coupling the per-eNB radio load (occupied RBs) to its compute demands (CPU utilization) using measurement-based profiling inspired from~\cite{carlajournal}. We consider an InP operating a vRAN edge cloud co-located with the eNB instances at the same physical site (same room), where multiple virtual co-located eNB instances (any UE can be admitted to any of the eNBs) are provisioned at a non-real-time timescale (seconds to minutes) via the SMO/non-RT RIC through per-eNB budgets for RBs and CPU. To match the adopted profiling regime, each eNB can serve up to 4 UEs ($n \in [1, 4]$) at predefined SINR with $\mathrm{MCS} \in {1,\ldots,27}$, while the occupied RB load varies across slices. For each eNB $i$, we sample $\mathrm{RB}_i \sim \mathcal{U}[1,50]$ for $N-3$ slices and add three ``weak'' slices with $\mathrm{RB}\in\{1,2,3\}$ to increase heterogeneity. We then generate demands $d_i=(d_{i,\mathrm{RB}},d_{i,\mathrm{CPU}},d_{i,n})$ with CPU utilization demand adopted from the regression model taken from \cite{carlajournal},
$
d_{i,\mathrm{CPU}} \;=\; 3.46\, \mathrm{n} + 0.325\,\mathrm{RB}_i + 0.28\,\mathrm{MCS_i} + 26.55
\quad (\mathrm{with}\ \mathrm{MCS_i} \sim \mathcal{U}[1,27]).
$

Capacities are generated from the aggregate demand vector $c=\sum_i d_i$ and a set of pre-defined congestion profiles. We select profiles that create RB-, CPU-, or $n$-bottlenecks while avoiding near-infeasible corners caused by tight CPU baselines under the above equalities (each eNB requires CPU utilization even if no user was admitted and no RBs were assigned). Finally, if the assigned CPU utilization for eNB $i$ is below the baseline term $(0.28\cdot \mathrm{MCS}_i + 26.55)$, we consider all the assigned resources to it as wasted: without sufficient CPU to sustain the baseline processing implied by the selected MCS, any allocated RB/UE budget is effectively unusable at run time, hence it cannot be converted into delivered service.

Similar to the synthetic experiment we report stacked bar plot for capacity partitioning, Figure~\ref{fig:capacitydistributionreal}, CDF for the overall satisfaction, Figure~\ref{fig:overallsatisfactionreal}, CDF for the minimum satisfaction, Figure~\ref{fig:minsatisfactionreal} and allocation based Jain's fairness index between DDRF and utilitarian, Figure~\ref{fig:fairnessboxplotreal}, these figures assert the numerical results obtained with the synthetic experiment. 

Our framework enforces fairness while respecting user-specific inter-resource dependencies. When dependencies are affine, the resulting DDRF program remains convex and can be solved efficiently. In contrast, nonlinear dependencies typically make the problem nonconvex particularly when equality constraints couple linear and nonlinear terms creating challenges in both feasibility enforcement and global optimization. We therefore emphasize that DDRF is a general framework rather than a single fixed solver: the appropriate solution approach depends on the structure of the dependency functions. In practice, one can leverage problem-specific properties via convexification, sequential linearization, conservative relaxations, or carefully designed heuristics, while preserving the fairness logic of DDRF.

Another limitation concerns fairness quantification. The multi-resource allocation literature still lacks a widely accepted global quantitative fairness measure. More importantly, we argue that any meaningful metric must be dependency-aware: assessing fairness purely from allocated shares without accounting for whether those shares are actually usable under the declared inter-resource couplings can be misleading. This is precisely what we observed in the Jain-index boxplots under nonlinear proportional dependencies, where allocation-based fairness did not reflect the effective fairness experienced by tenants. Designing robust dependency-aware fairness metrics is therefore a key direction we leave for future work.

\section{Conclusion and perspectives}
\label{conclusion}

This paper introduced \textit{Dependency-Aware Dominant Resource Fairness (DDRF)}, a generalization of max--min fairness (MMF) to multi-resource allocation under generic inter-resource dependencies. Building on the principles of dominant resource fairness (DRF), DDRF resolves three key shortcomings: it properly accounts for weak users, redefines the bottleneck resource in the presence of dependencies, and reformulates the notion of user satisfaction to reflect dependency-feasible service levels. As a result, DDRF guarantees Pareto efficiency and bottleneck fairness.

The framework is practical: given demand vectors, resource capacities, and dependency constraints, DDRF computes fair allocations in a centralized manner. Extensive experiments show that DDRF maximizes the number of fully satisfied tenants without starving others, strictly outperforms MMF in terms of effective satisfaction, and guarantees resource saturation whenever feasible. Across a broad range of settings - from linear to nonlinear dependencies - DDRF consistently achieved zero waste, improved satisfaction compared to dependency-agnostic approaches, and delivered stronger fairness than dependency-aware baselines. Notably, DDRF increased the effective satisfaction rate by up to $80\%$, improved Jain's fairness by up to $60\%$, and reduced waste by about $60\%$.

Future directions include extending our theoretical guarantees beyond Pareto efficiency and saturation. While we derived structural assumptions on dependency sets that ensure saturation of at least one congested resource and thus Pareto efficiency an important next step is to identify weaker (more general) assumptions under which additional fairness properties can be guaranteed, such as share-incentive, strategy-proofness, and envy-freeness, especially in the presence of nonlinear dependencies. Finally, moving beyond centralized allocation is essential for scalability, motivating the design of distributed implementations suitable for large-scale systems.



\bibliographystyle{IEEEtran} 
\bibliography{references} 

\begin{thebibliography}{10}
\providecommand{\url}[1]{#1}
\csname url@samestyle\endcsname
\providecommand{\newblock}{\relax}
\providecommand{\bibinfo}[2]{#2}
\providecommand{\BIBentrySTDinterwordspacing}{\spaceskip=0pt\relax}
\providecommand{\BIBentryALTinterwordstretchfactor}{4}
\providecommand{\BIBentryALTinterwordspacing}{\spaceskip=\fontdimen2\font plus
\BIBentryALTinterwordstretchfactor\fontdimen3\font minus
  \fontdimen4\font\relax}
\providecommand{\BIBforeignlanguage}[2]{{%
\expandafter\ifx\csname l@#1\endcsname\relax
\typeout{** WARNING: IEEEtran.bst: No hyphenation pattern has been}%
\typeout{** loaded for the language `#1'. Using the pattern for}%
\typeout{** the default language instead.}%
\else
\language=\csname l@#1\endcsname
\fi
#2}}
\providecommand{\BIBdecl}{\relax}
\BIBdecl

\bibitem{chandra2000surplus}
A.~Chandra, M.~Adler, P.~Goyal, and P.~Shenoy, ``Surplus fair scheduling: A
  proportional-share cpu scheduling algorithm for symmetric multiprocessors,''
  in \emph{Proceedings of the 4th Symposium on Operating Systems Design and
  Implementation (OSDI)}.\hskip 1em plus 0.5em minus 0.4em\relax USENIX
  Association, 2000.

\bibitem{boutin2014apollo}
E.~Boutin, J.~Ekanayake, W.~Lin, B.~Shi, J.~Zhou, Z.~Qian, M.~Wu, and L.~Zhou,
  ``Apollo: Scalable and coordinated scheduling for cloud-scale computing,'' in
  \emph{Proceedings of the 11th USENIX Symposium on Operating Systems Design
  and Implementation (OSDI)}.\hskip 1em plus 0.5em minus 0.4em\relax USENIX
  Association, 2014, pp. 285--300.

\bibitem{slicing5G}
P.~Rost, C.~Mannweiler, D.~S. Michalopoulos, C.~Sartori, V.~Sciancalepore,
  N.~Sastry, O.~Holland, S.~Tayade, B.~Han, D.~Bega, D.~Aziz, and H.~Bakker,
  ``Network slicing to enable scalability and flexibility in 5g mobile
  networks,'' \emph{IEEE Communications Magazine}, vol.~55, no.~5, pp. 72--79,
  2017.

\bibitem{proportionalfairness}
F.~Kelly, ``Charging and rate control for elastic traffic,'' \emph{European
  transactions on Telecommunications}, vol.~8, no.~1, pp. 33--37, 1997.

\bibitem{proportionalfairnessequation}
F.~P. Kelly, A.~K. Maulloo, and D.~K.~H. Tan, ``Rate control for communication
  networks: shadow prices, proportional fairness and stability,'' \emph{Journal
  of the Operational Research society}, vol.~49, no.~3, pp. 237--252, 1998.

\bibitem{bertsekasbook}
D.~Bertsekas and R.~Gallager, \emph{Data networks}.\hskip 1em plus 0.5em minus
  0.4em\relax Athena Scientific, 2021.

\bibitem{mo2002fair}
J.~Mo and J.~Walrand, ``Fair end-to-end window-based congestion control,''
  \emph{IEEE/ACM Transactions on networking}, vol.~8, no.~5, pp. 556--567,
  2002.

\bibitem{lan2010axiomatic}
T.~Lan, D.~Kao, M.~Chiang, and A.~Sabharwal, \emph{An axiomatic theory of
  fairness in network resource allocation}.\hskip 1em plus 0.5em minus
  0.4em\relax IEEE, 2010.

\bibitem{DRFpaper}
A.~Ghodsi, M.~Zaharia, B.~Hindman, A.~Konwinski, S.~Shenker, and I.~Stoica,
  ``Dominant resource fairness: Fair allocation of multiple resource types,''
  in \emph{8th USENIX symposium on networked systems design and implementation
  (NSDI 11)}, 2011.

\bibitem{wang2014multi}
W.~Wang, B.~Liang, and B.~Li, ``Multi-resource fair allocation in heterogeneous
  cloud computing systems,'' \emph{IEEE Transactions on Parallel and
  Distributed Systems}, vol.~26, no.~10, pp. 2822--2835, 2014.

\bibitem{joe2013multiresource}
C.~Joe-Wong, S.~Sen, T.~Lan, and M.~Chiang, ``Multiresource allocation:
  Fairness--efficiency tradeoffs in a unifying framework,'' \emph{IEEE/ACM
  Transactions on Networking}, vol.~21, no.~6, pp. 1785--1798, 2013.

\bibitem{MURANESPaper}
F.~Fossati, S.~Moretti, P.~Perny, and S.~Secci, ``Multi-resource allocation for
  network slicing,'' \emph{IEEE/ACM Transactions on Networking}, vol.~28,
  no.~3, pp. 1311--1324, 2020.

\bibitem{bonaldbmf}
\BIBentryALTinterwordspacing
T.~Bonald and J.~Roberts, ``Multi-resource fairness: Objectives, algorithms and
  performance,'' in \emph{Proceedings of the 2015 ACM SIGMETRICS International
  Conference on Measurement and Modeling of Computer Systems}, ser. SIGMETRICS
  '15.\hskip 1em plus 0.5em minus 0.4em\relax New York, NY, USA: Association
  for Computing Machinery, 2015, p. 31–42. [Online]. Available:
  \url{https://doi.org/10.1145/2745844.2745869}
\BIBentrySTDinterwordspacing

\bibitem{hug}
M.~Chowdhury, Z.~Liu, A.~Ghodsi, and I.~Stoica, ``Hug: multi-resource fairness
  for correlated and elastic demands,'' in \emph{Proceedings of the 13th Usenix
  Conference on Networked Systems Design and Implementation}, ser.
  NSDI'16.\hskip 1em plus 0.5em minus 0.4em\relax USA: USENIX Association,
  2016, p. 407–424.

\bibitem{surveyresourceallocation}
P.~Poullie, T.~Bocek, and B.~Stiller, ``A survey of the state-of-the-art in
  fair multi-resource allocations for data centers,'' \emph{IEEE Transactions
  on Network and Service Management}, vol.~15, no.~1, pp. 169--183, 2018.

\bibitem{polese2023understanding}
M.~Polese, L.~Bonati, S.~D’oro, S.~Basagni, and T.~Melodia, ``Understanding
  o-ran: Architecture, interfaces, algorithms, security, and research
  challenges,'' \emph{IEEE Communications Surveys \& Tutorials}, vol.~25,
  no.~2, pp. 1376--1411, 2023.

\bibitem{khatibi2018modelling}
S.~Khatibi, K.~Shah, and M.~Roshdi, ``Modelling of computational resources for
  5g ran,'' in \emph{2018 European Conference on Networks and Communications
  (EuCNC)}.\hskip 1em plus 0.5em minus 0.4em\relax IEEE, 2018, pp. 1--5.

\bibitem{hojeij2025flexible}
H.~Hojeij, G.~I. Ricardo, M.~Sharara, S.~Hoteit, V.~V{\`e}que, and S.~Secci,
  ``On flexible association and placement in disaggregated ran designs,''
  \emph{Computer Communications}, vol. 238, p. 108166, 2025.

\bibitem{xiao2020can}
Y.~Xiao, J.~Zhang, and Y.~Ji, ``Can fine-grained functional split benefit to
  the converged optical-wireless access networks in 5g and beyond?'' \emph{IEEE
  Transactions on Network and Service Management}, vol.~17, no.~3, pp.
  1774--1787, 2020.

\bibitem{3gpp.38.214}
\BIBentryALTinterwordspacing
\emph{{TS 38.214: NR; Physical layer procedures for data}}, 3rd Generation
  Partnership Project (3GPP) Std. TS 38.214, 2023, release 17, v17.6.0.
  [Online]. Available:
  \url{https://www.3gpp.org/ftp/Specs/archive/38_series/38.214/}
\BIBentrySTDinterwordspacing

\bibitem{marshall1979inequalities}
A.~W. Marshall, I.~Olkin, and B.~C. Arnold, \emph{Inequalities: theory of
  majorization and its applications}.\hskip 1em plus 0.5em minus 0.4em\relax
  Springer, 1979.

\bibitem{zaharia2010delay}
M.~Zaharia, D.~Borthakur, J.~Sen~Sarma, K.~Elmeleegy, S.~Shenker, and
  I.~Stoica, ``Delay scheduling: a simple technique for achieving locality and
  fairness in cluster scheduling,'' in \emph{Proceedings of the 5th European
  conference on Computer systems}, 2010, pp. 265--278.

\bibitem{popa2012faircloud}
L.~Popa, G.~Kumar, M.~Chowdhury, A.~Krishnamurthy, S.~Ratnasamy, and I.~Stoica,
  ``Faircloud: Sharing the network in cloud computing,'' in \emph{Proceedings
  of the ACM SIGCOMM 2012 conference on Applications, technologies,
  architectures, and protocols for computer communication}, 2012, pp. 187--198.

\bibitem{rawls1971theory}
J.~Rawls, ``A theory of justice, harvard,'' \emph{Press, Cambridge}, 1971.

\bibitem{ogryczak2014fair}
W.~Ogryczak, H.~Luss, M.~Pi{\'o}ro, D.~Nace, and A.~Tomaszewski, ``Fair
  optimization and networks: A survey,'' \emph{Journal of Applied Mathematics},
  vol. 2014, no.~1, p. 612018, 2014.

\bibitem{mmfnsdi}
\BIBentryALTinterwordspacing
P.~Namyar, B.~Arzani, S.~Kandula, S.~Segarra, D.~Crankshaw, U.~Krishnaswamy,
  R.~Govindan, and H.~Raj, ``Solving {Max-Min} fair resource allocations
  quickly on large graphs,'' in \emph{21st USENIX Symposium on Networked
  Systems Design and Implementation (NSDI 24)}.\hskip 1em plus 0.5em minus
  0.4em\relax Santa Clara, CA: USENIX Association, Apr. 2024, pp. 1937--1958.
  [Online]. Available:
  \url{https://www.usenix.org/conference/nsdi24/presentation/namyar-solving}
\BIBentrySTDinterwordspacing

\bibitem{moulin2002axiomatic}
H.~Moulin, ``Axiomatic cost and surplus sharing,'' \emph{Handbook of social
  choice and welfare}, vol.~1, pp. 289--357, 2002.

\bibitem{moodvalue}
F.~Fossati, S.~Hoteit, S.~Moretti, and S.~Secci, ``Fair resource allocation in
  systems with complete information sharing,'' \emph{IEEE/ACM Transactions on
  Networking}, vol.~26, no.~6, pp. 2801--2814, 2018.

\bibitem{parkes2015beyond}
D.~C. Parkes, A.~D. Procaccia, and N.~Shah, ``Beyond dominant resource
  fairness: Extensions, limitations, and indivisibilities,'' \emph{ACM
  Transactions on Economics and Computation (TEAC)}, vol.~3, no.~1, pp. 1--22,
  2015.

\bibitem{wang2014dominant}
W.~Wang, B.~Li, and B.~Liang, ``Dominant resource fairness in cloud computing
  systems with heterogeneous servers,'' in \emph{IEEE INFOCOM 2014-IEEE
  Conference on Computer Communications}.\hskip 1em plus 0.5em minus
  0.4em\relax IEEE, 2014, pp. 583--591.

\bibitem{chowdhury2016hug}
M.~Chowdhury, Z.~Liu, A.~Ghodsi, and I.~Stoica,
  ``$\{$HUG$\}$:$\{$Multi-Resource$\}$ fairness for correlated and elastic
  demands,'' in \emph{13th USENIX symposium on networked systems design and
  implementation (NSDI 16)}, 2016, pp. 407--424.

\bibitem{BMFpaper}
\BIBentryALTinterwordspacing
T.~Bonald and J.~Roberts, ``Multi-resource fairness: Objectives, algorithms and
  performance,'' \emph{SIGMETRICS Perform. Eval. Rev.}, vol.~43, no.~1, p.
  31–42, Jun. 2015. [Online]. Available:
  \url{https://doi.org/10.1145/2796314.2745869}
\BIBentrySTDinterwordspacing

\bibitem{dalton1920measurement}
H.~Dalton, ``The measurement of the inequality of incomes,'' \emph{The Economic
  Journal}, vol.~30, no. 119, pp. 348--361, 1920.

\bibitem{shen2016disciplined}
\BIBentryALTinterwordspacing
X.~Shen, S.~Diamond, Y.~Gu, and S.~Boyd, ``Disciplined convex-concave
  programming,'' \emph{2016 IEEE 55th Conference on Decision and Control
  (CDC)}, pp. 1009--1014, 2016. [Online]. Available:
  \url{https://stanford.edu/~boyd/papers/dccp.html}
\BIBentrySTDinterwordspacing

\bibitem{vantageEC2}
Vantage, ``Ec2instances.info - easy amazon ec2 instance comparison,''
  \url{https://instances.vantage.sh/}.

\bibitem{3gpp_ts_38_104}
\BIBentryALTinterwordspacing
{3GPP}, ``{TS 38.104: NR; Base Station (BS) radio transmission and
  reception},'' 3rd Generation Partnership Project (3GPP), Tech. Rep., 2023,
  release 17, v17.9.0. [Online]. Available:
  \url{https://www.etsi.org/deliver/etsi_ts/138100_138199/138104/17.09.00_60/ts_138104v170900p.pdf}
\BIBentrySTDinterwordspacing

\bibitem{diamond2016cvxpy}
S.~Diamond and S.~Boyd, ``{CVXPY}: A {P}ython-embedded modeling language for
  convex optimization,'' \emph{Journal of Machine Learning Research}, vol.~17,
  no.~83, pp. 1--5, 2016.

\bibitem{shen2016dccp}
X.~Shen, S.~Diamond, Y.~Gu, and S.~Boyd, ``Disciplined convex‐concave
  programming,'' \emph{arXiv preprint arXiv:1604.02639}, 2016.

\bibitem{jain1984quantitative}
R.~K. Jain, D.-M.~W. Chiu, W.~R. Hawe \emph{et~al.}, ``A quantitative measure
  of fairness and discrimination,'' \emph{Eastern Research Laboratory, Digital
  Equipment Corporation, Hudson, MA}, vol.~21, no.~1, pp. 2022--2023, 1984.

\bibitem{carlajournal}
S.~Pramanik, A.~Ksentini, and C.~F. Chiasserini, ``Cost-efficient slicing in
  virtual radio access networks,'' \emph{Computer Communications}, vol. 209,
  pp. 349--358, 2023.

\end{thebibliography}

\appendices

\section{Assumptions on Dependencies for Saturation}
\label{appendixA}

We now specify the assumptions on $\mathcal{F}$ that guarantee the Pareto efficiency of~\eqref{DDRF} and~\eqref{utilitarian}. Consider the dependency-aware multi-resource problem $(D,C,\mathcal{F})$ and assume that at least one resource is congested. Let $X^\star$ be an optimal solution of~\eqref{utilitarian} or~\eqref{DDRF}.

\textit{Local regularity and local improving feasible direction.}
For any tenant $i \in \mathcal{N}$ such that there exists an active congested bottleneck resource $j_i' \in \argmax_{j \in \mathcal C:\, y_{ij}=1} \frac{d_{ij}}{c_j}$, we impose the following assumptions on the dependency constraints $f_i^{(k)}$ of tenant $i$ that are active at $X^\star$, i.e., $f_i^{(k)}(X^\star)=0$, and whose dependency sets $S_i^{(k)} \subseteq \mathcal S_i$ contain $j'$.

Define $\mathcal A_i(j') \triangleq \{\, k \in \{1,\dots,K_i\} : j' \in S_i^{(k)} \text{ and } f_i^{(k)}(X^\star)=0 \,\}$, the set of active dependency constraints of tenant $i$ that contain $j'$ and active at $X^\star$. We distinguish two cases according to the cardinality of $\mathcal A_i(j')$.

\begin{itemize}
\item \textit{Case (i) (unique active constraint).}
Suppose that $\mathcal A_i(j')=\{k\}$. Then we assume:
(1) \textit{(Local regularity)} The function $f_i^{(k)}$ is $C^1$ in a neighborhood of $X^\star$ with respect to the variables $(x_{ir})_{r\in S_i^{(k)}}$. 
(2) \textit{(Improving-direction inequality)} Define $g_r \triangleq \frac{\partial f_i^{(k)}}{\partial x_{ir}}(X^\star)$ for all $r\in S_i^{(k)}$. Then:

$
g_{j'}
\left(
\sum_{r\in S_i^{(k)}\setminus\{j'\}} g_r
\right)
<
\sum_{r\in S_i^{(k)}\setminus\{j'\}} g_r^2.
$
\vspace{0.2 cm}

\item \textit{Case (ii) (multiple active constraints).}
Suppose that $|\mathcal A_i(j')| \ge 2$. Then we assume:
(1) \textit{(Local regularity)} For every $k \in \mathcal A_i(j')$, the function $f_i^{(k)}$ is $C^1$ in a neighborhood of $X^\star$.
(2) \textit{(Joint improving-direction condition)} Let $U_i(j') \triangleq \bigcup_{k\in\mathcal A_i(j')} S_i^{(k)}$ be the set of coordinates of tenant $i$ allowed to vary. Define the Jacobian matrix $J_{\mathrm{act}}(X^\star)\in\mathbb R^{|\mathcal A_i(j')|\times |U_i(j')|}$ by:

$
[J_{\mathrm{act}}(X^\star)]_{k,r}
\triangleq
\frac{\partial f_i^{(k)}}{\partial x_{ir}}(X^\star),
$
\vspace{0.2 cm}

for $k\in\mathcal A_i(j')$ and $r\in U_i(j')$, with the convention that $\frac{\partial f_i^{(k)}}{\partial x_{ir}}(X^\star)=0$ whenever $r\notin S_i^{(k)}$. We assume that there exists a direction $\Delta_i\in\mathbb R^{|U_i(j')|}$ such that $\Delta_{ij'}>0$, $J_{\mathrm{act}}(X^\star)\Delta_i=0$, and $\mathbf 1^\top \Delta_i>0$, and moreover $\nabla f_i^{(k)}(X^\star)^\top \Delta_i \le 0$ for every active inequality constraint $k\in\mathcal A_i(j')$.
\end{itemize}

\section{Proof of Lemma~\ref{lemma1}}
\label{appendixB}
\begin{proof}
We prove saturation (hence Pareto efficiency) by contradiction. Without loss of  generality, we consider the minimal case where there is a single congested resource $j'\in\mathcal M$. 
Let $X^\star$ be an optimal solution of \eqref{utilitarian}. We show that $j'$ must be saturated, i.e., $\sum_{i=1}^N d_{ij'}x_{ij'}^\star=c_{j'}$. Assume for contradiction that $j'$ is not saturated: $\sum_{i=1}^N d_{ij'}x_{ij'}^\star<c_{j'}$. Let the excess be $s_{j'}\triangleq c_{j'}-\sum_{i=1}^N d_{ij'}x_{ij'}^\star>0$. Since $j'$ is congested in the sense $\sum_i d_{ij'}>c_{j'}$, feasibility implies that there exists a tenant $i'\in\mathcal N$ with $x_{i'j'}^\star<1$.

We construct a perturbation supported on tenant $i'$ only. The key intuition is the first-order Taylor expansion of each $C^1$ dependency constraint $f_{i'}^{(k)}$ around $X^\star$: for any direction $\Delta x_{i'}$ and any $k\in\mathcal A_{i'}(j')$, we have:

$f_{i'}^{(k)}(X^\star+\varepsilon\Delta X)=f_{i'}^{(k)}(X^\star)+\varepsilon\,\nabla f_{i'}^{(k)}(X^\star)^\top\Delta x_{i'}+o(\varepsilon)$\\
as $\varepsilon\to 0$, where $o(\varepsilon)$ denotes a remainder satisfying $o(\varepsilon)/\varepsilon\to 0$ as $\varepsilon\to 0$. Hence an active equality constraint $f_{i'}^{(k)}(X^\star)=0$ is preserved to first order by imposing $\nabla f_{i'}^{(k)}(X^\star)^\top\Delta x_{i'}=0$.

Fix the pair $(i',j')$ given by the assumptions of Lemma~\ref{lemma1}. If $|\mathcal A_{i'}(j')|=1$, write $\mathcal A_{i'}(j')=\{k'\}$, let $S\triangleq S_{i'}^{(k')}$, let $J\triangleq S\setminus\{j'\}$, and define $g_r\triangleq \partial f_{i'}^{(k')}/\partial x_{i'r}(X^\star)$ for all $r\in S$. Define a direction $\Delta x_{i'}$ by $\Delta x_{i'j'}=1$, $\Delta x_{i'r}=-(g_{j'}g_r)/\sum_{\ell\in J} g_\ell^2$ for $r\in J$, and $\Delta x_{i'r}=0$ for $r\notin S$. Then $\nabla f_{i'}^{(k')}(X^\star)^\top\Delta x_{i'}=g_{j'}+\sum_{r\in J}g_r\Delta x_{i'r}=0$, hence the unique active dependency constraint is preserved to first order. Moreover, the utility gain along this direction is $\mathbf 1^\top\Delta x_{i'}=1-(g_{j'}\sum_{r\in J}g_r)/\sum_{r\in J}g_r^2$, which is strictly positive by the inequality in Case (i) (2) in appendix~\ref{appendixA}. 

By $C^1$-regularity,  $\nabla f_{i'}^{(k')}(X^\star)^\top\Delta x_{i'}=0$ ensures that the active dependency constraints are preserved to first order. To enforce them \emph{exactly} (not only to first order), we invoke the \textit{implicit function theorem}: since $\sum_{\ell\in J} g_\ell^2>0$, there exists $r_0\in J$ with $\partial f_{i'}^{(k')}/\partial x_{i'r_0}(X^\star)\neq 0$, so for $\varepsilon>0$ small enough one can choose $x_{i'r_0}$ as a $C^1$ function of the remaining perturbed coordinates (including $x_{i'j'}$) so that $f_{i'}^{(k')}(X^\star+\varepsilon\Delta X)=0$ holds exactly. 

We then choose any $\varepsilon\in(0,\bar\varepsilon)$ such that
$\bar\varepsilon \triangleq \min\Bigl\{\frac{s_{j'}}{d_{i'j'}}, \;
\min_{r\in S:\,\Delta x_{i'r}>0}\frac{1-x_{i'r}^\star}{\Delta x_{i'r}},
\;
\min_{r\in S:\,\Delta x_{i'r}<0}\frac{x_{i'r}^\star}{-\Delta x_{i'r}}
\Bigr\}$ Since $\Delta x_{i'j'}=1$, this guarantees that $0\le x_{i'r}^\star+\varepsilon\Delta x_{i'r}\le 1$ for every $r\in S$. Moreover,
$\sum_i d_{ij'}(x_{ij'}^\star+\varepsilon\Delta x_{ij'})
=
\sum_i d_{ij'}x_{ij'}^\star+d_{i'j'}\varepsilon
\le c_{j'}$.
Thus the bound constraints on the perturbed coordinates and the capacity constraint on $j'$ remain satisfied.

Finally, the objective strictly increases: $\sum_{i,j}(x_{ij}^\star+\varepsilon\Delta x_{ij})=\sum_{i,j}x_{ij}^\star+\varepsilon\,\mathbf 1^\top\Delta x_{i'}$, and $\mathbf 1^\top\Delta x_{i'}>0$, hence the objective is larger for any such $\varepsilon>0$. This contradicts optimality of $X^\star$. Therefore $\sum_{i=1}^N d_{ij'}x_{ij'}^\star=c_{j'}$, i.e., at least one congested resource is saturated at the utilitarian optimum and Pareto efficiency immediately follows.

\end{proof}

Note: Inactive inequality constraints are those with $f_i^{(k)}(X^\star)<0$. For such constraints, continuity at $X^\star$ of $f_i^{(k)}$ for $k \in \mathcal{A}_i(j')$ is enough: sufficiently small perturbations preserve strict feasibility. Thus only active constraints need to be enforced explicitly.

\section{Proof of Theorem~\ref{theorem1}}
\label{appendixC}

\begin{proof}

We argue by contradiction and, without loss of generality, consider the minimal nontrivial case of a single congested resource $j'\in\mathcal M$ with exactly two active tenants $i',i''$ on it.

Let $X^\star$ be an optimal solution of \eqref{DDRF}, and suppose that $j'$ is not saturated, i.e.,
$\sum_{i=1}^N d_{ij'}x_{ij'}^\star<c_{j'}$.
Set
$s_{j'}\triangleq c_{j'}-\sum_i d_{ij'}x_{ij'}^\star>0$.
Since $i'$ and $i''$ are active on $j'$, the fairness equality gives
$\mu_{i'j'}x_{i'j'}^\star=\mu_{i''j'}x_{i''j'}^\star$.
We therefore choose the perturbation on the bottleneck coordinate $j'$ as
$x_{i'j'}^\star \mapsto x_{i'j'}^\star+\varepsilon/\mu_{i'j'}$
and
$x_{i''j'}^\star \mapsto x_{i''j'}^\star+\varepsilon/\mu_{i''j'}$,
so that the fairness equality on $j'$ is preserved exactly for every $\varepsilon>0$.

By the assumptions of Appendix~\ref{appendixA}, for each tenant $t\in\{i',i''\}$ there exists a tenant-wise feasible direction $\Delta x_t$ supported on the coordinates coupled to $j'$ such that $\Delta x_{tj'}=1/\mu_{tj'}$, the active dependency constraints containing $j'$ are preserved to first order (and exactly after the same implicit-function as in Lemma~\ref{lemma1}), and $\mathbf 1^\top \Delta x_t>0$. Let $S_t$ denote the set of perturbed coordinates of tenant $t$. We then choose any $\varepsilon\in(0,\bar\varepsilon)$ such that:

$
\bar\varepsilon = \min\Biggl\{
\frac{s_{j'}}{d_{i'j'}/\mu_{i'j'}+d_{i''j'}/\mu_{i''j'}},
\min_{t\in\{i',i''\}} \\ \min_{r\in S_t:\,\Delta x_{tr}>0}\frac{1-x_{tr}^\star}{\Delta x_{tr}},
\min_{t\in\{i',i''\}}\min_{r\in S_t:\,\Delta x_{tr}<0}\frac{x_{tr}^\star}{-\Delta x_{tr}}
\Biggr\}.
$

The remainder of the proof then follows exactly as in Lemma~\ref{lemma1}.
\end{proof}

\section{Proof of Theorem~\ref{theorem2}}
\label{AppendixD}
 
\subsection{Case 1: All Users active, some have bottleneck on a non-congested resource  }
\begin{proof}
Assume linear dependency and that the equalization constraints hold for all users (Active case). With $\alpha_i^{\mathcal C}=1/\mu_i^{\mathcal C}$, $\alpha_i=1/\mu_i$,  $\mu_i^{\mathcal C}x_i=t^{\mathcal C}\ \Rightarrow\ x_i=t^{\mathcal C}\alpha_i^{\mathcal C},
\mu_i x_i=t\ \Rightarrow\ x_i=t\alpha_i.$ \\
Moreover, $x_i\le 1$ implies:
$t^{\mathcal C}\le \min_{i\in\mathcal N}\mu_i^{\mathcal C},
t\le \min_{i\in\mathcal N}\mu_i.$\\
For $j\in\mathcal M_0$:

$\sum_i d_{ij}x_i \le c_j^{\mathcal C}
\ \Rightarrow\
t^{\mathcal C}\sum_i \alpha_i^{\mathcal C}d_{ij}\le c_j^{\mathcal C}
\ \Rightarrow\
t^{\mathcal C}\le \frac{c_j^{\mathcal C}}{\sum_i \alpha_i^{\mathcal C}d_{ij}}.$\\
Hence, $t^{\mathcal C}
=\min\!\left\{
\min_{i\in\mathcal N}\mu_i^{\mathcal C},\ 
\min_{j\in\mathcal M_0}\frac{c_j^{\mathcal C}}{\sum_i \alpha_i^{\mathcal C}d_{ij}}
\right\}.$\\
Therefore, 
$\sum_i x_i^{\mathrm{DDRF}}
=t^{\mathcal C}\sum_i \alpha_i^{\mathcal C}
=\min\!\left\{
\Big(\min_{i}\mu_i^{\mathcal C}\Big)\sum_i \alpha_i^{\mathcal C},\ 
\min_{j\in\mathcal M_0}\frac{c_j^{\mathcal C}\sum_i \alpha_i^{\mathcal C}}{\sum_i \alpha_i^{\mathcal C}d_{ij}}
\right\}$

$=\min_{j\in\mathcal M_0}\frac{c_j^{\mathcal C}}{M_1(\alpha^{\mathcal C};d_{\cdot j})}
=\frac{c_{j^\star}^{\mathcal C}}{M_1(\alpha^{\mathcal C};d_{\cdot j^\star})},$

where $j^\star\in\arg\min_{j\in\mathcal M_0}\frac{M_1(\alpha^{\mathcal C};d_{\cdot j})}{c_j^{\mathcal C}}$ (with $c_0^{\mathcal C}=(\min_i\mu_i^{\mathcal C})\sum_i\alpha_i^{\mathcal C}$, $d_{\cdot 0}=\mathbf 1$). \\
Similarly,
$t
=\min\!\left\{
\min_{i\in\mathcal N}\mu_i,\ 
\min_{j\in\mathcal M_0}\frac{c_j}{\sum_i \alpha_i d_{ij}}
\right\},$

$\sum_i x_i^{\mathrm{DRF}}
=t\sum_i \alpha_i
=\min_{j\in\mathcal M_0}\frac{c_j}{M_1(\alpha;d_{\cdot j})}
=\frac{c_{j'}}{M_1(\alpha;d_{\cdot j'})},$

where $j'\in\arg\min_{j\in\mathcal M_0}\frac{M_1(\alpha;d_{\cdot j})}{c_j}$ (with $c_0=(\min_i\mu_i)\sum_i\alpha_i$).\\
Thus:
$\sum_i x_i^{\mathrm{DDRF}}\ge \sum_i x_i^{\mathrm{DRF}}
\iff
\frac{c_{j^\star}^{\mathcal C}}{M_1(\alpha^{\mathcal C};d_{\cdot j^\star})}
\ge
\frac{c_{j'}}{M_1(\alpha;d_{\cdot j'})}$

$\iff
\frac{M_1(\alpha^{\mathcal C};d_{\cdot j^\star})}{c_{j^\star}^{\mathcal C}}
\le
\frac{M_1(\alpha;d_{\cdot j'})}{c_{j'}}.$

(i) If $\dfrac{M_1(\alpha^{\mathcal C};d_{\cdot j^\star})}{c_{j^\star}^{\mathcal C}}\le \dfrac{M_1(\alpha;d_{\cdot j'})}{c_{j'}}$, then $\sum_i x_i^{\mathrm{DDRF}}\ge \sum_i x_i^{\mathrm{DRF}}$.  

(ii) If $\dfrac{M_1(\alpha^{\mathcal C};d_{\cdot j^\star})}{c_{j^\star}^{\mathcal C}} > \dfrac{M_1(\alpha;d_{\cdot j'})}{c_{j'}}$, then $\sum_i x_i^{\mathrm{DDRF}}< \sum_i x_i^{\mathrm{DRF}}$.

\subsection{Case 2: At least one User weak, some have bottleneck on a non-congested resource }
In DDRF, $x_i=1$ for all $i\in\mathcal W$. For $i\in\mathcal A$.\\
The equalization constraints give
$\mu_i^{\mathcal C}x_i=t^{\mathcal C}\ \Rightarrow\ x_i=t^{\mathcal C}\alpha_i^{\mathcal C},\qquad i\in\mathcal A,$\\
Hence $x_i\le 1$ implies $t^{\mathcal C}\le \min_{i\in\mathcal A}\mu_i^{\mathcal C}$. For $j\in\mathcal M$,\\
$\sum_{i\in\mathcal N} d_{ij}x_i\le c_j
\ \Rightarrow\
\sum_{i\in\mathcal A} d_{ij}x_i \le c_j-\sum_{i\in\mathcal W}d_{ij}
\ \Rightarrow\
t^{\mathcal C}\sum_{i\in\mathcal A}\alpha_i^{\mathcal C}d_{ij}\le \tilde c_j,$

where $\tilde c_j\triangleq c_j-\sum_{i\in\mathcal W}d_{ij}$. With $\tilde c_0\triangleq (\min_{i\in\mathcal A}\mu_i^{\mathcal C})\sum_{i\in\mathcal A}\alpha_i^{\mathcal C}$ and $\mathcal M_0=\mathcal M\cup\{0\}$ (and $d_{i0}=1$).\\
This yields:
$t^{\mathcal C}=\min_{j\in\mathcal M_0}\frac{\tilde c_j}{\sum_{i\in\mathcal A}\alpha_i^{\mathcal C}d_{ij}},$\\
$\sum_{i\in\mathcal N}x_i^{\mathrm{DDRF}}
=
|\mathcal W|+t^{\mathcal C}\sum_{i\in\mathcal A}\alpha_i^{\mathcal C}
=
|\mathcal W|+\min_{j\in\mathcal M_0}\frac{\tilde c_j}{M_1(\alpha_{\mathcal A}^{\mathcal C};d_{\mathcal A,\cdot j})}$
$=
|\mathcal W|+\frac{\tilde c_{\tilde j^\star}}{M_1(\alpha_{\mathcal A}^{\mathcal C};d_{\mathcal A,\cdot \tilde j^\star})},$

where $\tilde j^\star\in\arg\min_{j\in\mathcal M_0}\frac{M_1(\alpha_{\mathcal A}^{\mathcal C};d_{\mathcal A,\cdot j})}{\tilde c_j}$. \\
For DRF,
$\mu_i x_i=t\ \Rightarrow\ x_i=t\alpha_i,\qquad
t=\min_{j\in\mathcal M_0}\frac{c_j}{\sum_{i\in\mathcal N}\alpha_i d_{ij}},$\\
$\sum_{i\in\mathcal N}x_i^{\mathrm{DRF}}
=t\sum_{i\in\mathcal N}\alpha_i
=\min_{j\in\mathcal M_0}\frac{c_j}{M_1(\alpha;d_{\cdot j})}
=\frac{c_{j'}}{M_1(\alpha;d_{\cdot j'})},$

where $j'\in\arg\min_{j\in\mathcal M_0}\frac{M_1(\alpha;d_{\cdot j})}{c_j}$.\\
Therefore,
$\sum_{i\in\mathcal N}x_i^{\mathrm{DDRF}}\ge \sum_{i\in\mathcal N}x_i^{\mathrm{DRF}}
\iff
|\mathcal W|+\frac{\tilde c_{\tilde j^\star}}{M_1(\alpha_{\mathcal A}^{\mathcal C};d_{\mathcal A,\cdot \tilde j^\star})}
\ \ge\
\frac{c_{j'}}{M_1(\alpha;d_{\cdot j'})},$

and the reverse inequality yields $\sum_i x_i^{\mathrm{DDRF}}<\sum_i x_i^{\mathrm{DRF}}$.
\end{proof}

\end{document}